\theoremstyle{definition}
\newtheorem{observation}{Observation}
\newtheorem{proposition}{Proposition}
\newtheorem{lemma}{Lemma}
\newtheorem{theorem}{Theorem}
\newcommand{\convc}{{\mathcal S_{\rm cvc}}}
\newcommand{\conds}{{\mathcal S_{\rm cds}}}
\title{Enumerating minimal vertex covers and dominating sets with capacity and/or connectivity constraints\thanks{
This work is partially supported
by JSPS KAKENHI Grant Numbers 
JP20H00595, 
JP20K04973, 
JP20H05964, 
JP20H05967, 
JP21K17812, 
JP21K19765, 
JP22H03549,  
JP22H00513, 
and JP23H03344 
and JST ACT-X Grant Number JPMJAX2105. 
}}
\author{
Yasuaki Kobayashi\thanks{Hokkaido University. Email:\texttt{koba@ist.hokudai.ac.jp}} \and
Kazuhiro Kurita\thanks{Nagoya University. Email:\texttt{kurita@i.nagoya-u.ac.jp}, \texttt{ono@nagoya-u.jp}}   \and
Kevin Mann\thanks{Universit\"at Trier. Email:\texttt{mann@uni-trier.de}}        \and
Yasuko Matsui\thanks{Tokai University. Email:\texttt{yasuko@tokai-u.jp}}     \and
Hirotaka Ono\footnotemark[3]}
\date{}
\begin{document}

\maketitle
\begin{abstract}
In this paper, we consider the problems of enumerating minimal vertex covers and minimal dominating sets with capacity and/or connectivity constraints. We develop polynomial-delay enumeration algorithms for these problems on bounded-degree graphs. For the case of minimal connected vertex covers, our algorithms run in polynomial delay even on the class of $d$-claw free graphs, extending the result on bounded-degree graphs, and in output quasi-polynomial time on general graphs. To complement these algorithmic results, we show that the problems of enumerating minimal connected vertex covers, minimal connected dominating sets, and minimal capacitated vertex covers in $2$-degenerated bipartite graphs are at least as hard as enumerating minimal transversals in hypergraphs.
\end{abstract}
\section{Introduction}
Enumerating minimal or maximal vertex subsets satisfying some graph properties has been widely studied for decades since it has various applications in many fields.
For example, the problem of enumerating maximal cliques in graphs is an essential task in data mining, which is intensively studied from both theoretical and practical perspectives~\cite{conf/swat/MakinoU04,FoxRSWW:SICOMP:Finding:2020,ConteGMV:Algorithmica:Sublinear:2020,ConteVMPT:EDBT:Finding:2016,BronK:CACM:Finding:1973,TomitaTT:TCS:worst-case:2006,TsukiyamaIAS:SICOMP:New:1977}.
This enumeration problem is equivalent to that of enumerating minimal vertex covers as there is a one-to-one correspondence between the collection of all maximal cliques in $G$ and that of all minimal vertex covers of its complement $\overline{G}$.

Another important enumeration problem is to enumerate minimal dominating sets in graphs.
This enumeration problem is known to be ``equivalent''\footnote{This means that there is an output-polynomial time algorithm for enumerating minimal dominating sets in graphs if and only if there is an output-polynomial time algorithm for dualizing monotone Boolean functions.} to dualizing monotone Boolean functions or enumerating minimal transversals in hypergraphs~\cite{KanteLMN:SIDMA:Enumeration:2014}, which has many applications in a broad range of fields in computer science~\cite{BiochI95:iandc,EiterG95:siamcomp,EiterG:JELIA:Hypergraph:2002}. (See~\cite{EiterMG:DAM:Computational:2008} for a survey).

As for the complexity of these enumeration problems, the current status of the problems of enumerating maximal cliques and enumerating minimal dominating sets of graphs is significantly different.
By a classical result of Tsukiyama et al.~\cite{TsukiyamaIAS:SICOMP:New:1977}, we can enumerate all maximal cliques (and hence minimal vertex covers) of a graph in polynomial delay.
Here, an enumeration algorithm runs in \emph{polynomial delay} if the time elapsed\footnote{This also includes the running time of pre-processing and post-processing.} between every pair of two consecutive outputs is upper bounded by a polynomial solely in the input size.
In contrast to this, no output-polynomial time algorithm for enumerating minimal dominating sets of a graph is known so far, where an enumeration algorithm runs in \emph{output-polynomial time} if the total running time is upper bounded by a polynomial in the combined size of the input and all outputs\footnote{Note that the output size can be exponential in the input size.}. 
There are several results showing polynomial-delay or output-polynomial time algorithms for enumerating minimal dominating sets in several special classes of graphs~\cite{Kante:WG:Polynomial:2015,Kante:WADS:Polnomial:2015,Golovach:ICALP:Incremental:2013,KobayashiKW:arXiv:Efficient:2020,Courcelle:DAM:Linear:2009,BonamyDHPR:TALG:Enumerating:2020}.
The fastest known algorithm for general graphs is due to Fredman and Khachiyan~\cite{FredmanK:JAL:Complexity:1996}, which runs in time $N^{o(\log N)}$, where $N$ is the number of vertices and hyperedges plus the number of minimal dominating sets of an input graph.

As variants of these two enumeration problems, we study the problems of enumerating minimal \emph{connected} vertex covers and minimal \emph{connected} dominating sets of graphs, which we call \textsc{Minimal Connected Vertex Cover Enumeration} and \textsc{Minimal Connected Dominating Set Enumeration}, respectively.
Although these two combinatorial objects are considered to be natural (as they are well studied in several areas~\cite{UenoKG:DM:nonseparating:1988,Cygan:SWAT:Deterministic:2012,FominGK:Algorithmica:Solving:2008,EscoffierGM:JDA:Complexity:2010,BalakrishnanRR:WADS:Connected:1993,GuhaK:Algorithmica:Approximation:1998}, including input-sensitive enumeration algorithms~\cite{Abu-KhzamF0LM:ESA:Enumerating:2022,GolovachHK:EJC:Enumeration:18}), enumeration algorithms (with an output-sensitive analysis~\cite{FernauGS18}) for these variants are not much investigated in the literature.
Kant\'e et al.~\cite{KanteLMN:SIDMA:Enumeration:2014} showed that the problem of enumerating minimal connected dominating sets in split graphs is at least as hard as that of enumerating minimal dominating sets in general graphs, while the problem without the connectivity requirement admits a polynomial-delay algorithm on this class of graphs.
This indicates that the connected variant is also a challenging problem.
Very recently, Kobayashi et al.~\cite{KobayashiKW:arXiv:Polynomial:2024} devised a polynomial-delay algorithm for enumerating minimal connected vertex covers with cardinality at most $t$ for a given subcubic graph and a threshold $t$ by exploiting a well-known relation between a connected vertex cover in a (sub)cubic graph and a ``matching'' in a certain matroid~\cite{UenoKG:DM:nonseparating:1988}.

We also study another type of variants: the problems of enumerating minimal \emph{capacitated} vertex covers and minimal \emph{capacitated} dominating sets of graphs, which we call \textsc{Minimal Capacitated Vertex Cover Enumeration} and \textsc{Minimal Capacitated Dominating Set Enumeration}, respectively.
These two problems generalize the conventional minimal vertex cover and dominating set enumeration problems (see \Cref{sec:preli} for details).

In this paper, we tackle these enumeration problems by mainly restricting our focus to \emph{bounded-degree graphs}.
We design polynomial-delay algorithms for these problems on graphs with maximum degree $\Delta = O(1)$ in \Cref{sec:convc,sec:conds,sec:capacity}.
The polynomial-delay algorithm for \textsc{Minimal Connected Vertex Cover Enumeration} can be extended to that on $d$-claw free graphs with $d = O(1)$ (\Cref{thm:convc:claw}).
For \textsc{Minimal Connected Vertex Cover Enumeration} and \textsc{Minimal Capacitated Vertex Cover Enumeration}, we show, in \Cref{sec:convc,sec:capacity}, that these problems are at least as hard as enumerating minimal transversals in hypergraphs, even if the input graph is restricted to bipartite and $2$-degenerate.
In \Cref{sec:conds}, we also give a similar ``hardness'' result of \textsc{Minimal Connected Dominating Set Enumeration} on $2$-degenerate bipartite graphs.
These results indicate that the connectivity and/or capacity requirements make the problems ``harder'' in a certain sense.
Finally, in \Cref{sec:convc}, we give an \emph{output quasi-polynomial time} algorithm for \textsc{Minimal Connected Vertex Cover Enumeration} on general graphs.

\section{Preliminaries}\label{sec:preli}
Throughout this paper, we only consider undirected simple graphs (unless otherwise stated).
Let $G$ be a graph.
We use $n$ to denote the number of vertices in $G$.
We denote by $V(G)$ and $E(G)$ the sets of vertices and edges in $G$, respectively.
For $v \in V(G)$, the set of neighbors of $v$ in $G$ is denoted by $N_G(v)$.
This notation is extended to vertex sets: $N_G(X) = \bigcup_{v \in X}N_G(v) \setminus X$ for $X \subseteq V(G)$.
For $X \subseteq V(G)$, the subgraph of $G$ induced by $X$ is denoted by $G[X]$.
We define $N_G[v] = N_G(v) \cup \{v\}$ and $N_G[X] = N_G(X) \cup X$ for $v \in V(G)$ and $X \subseteq V(G)$.
When the subscripts are clear from the context, we may omit them.
For $v \in V(G)$, we write $G - v$ to denote the graph obtained from $G$ by deleting $v$ (and its incident edges).

A set of vertices $C$ is called a \emph{vertex cover} of $G$ if for every edge in $G$, at least one end vertex of it belongs to $C$.
A vertex cover $C$ is said to be \emph{connected} if $G[C]$ is connected.
A \emph{dominating set} of $G$ is a set of vertices $D$ such that $V(G) = N[D]$, that is, every vertex in $V(G) \setminus D$ has a neighbor in $D$.
A connected dominating set of $G$ is defined analogously.

The ``capacitated'' variants of a vertex cover and dominating set are defined as follows.
Let $c \colon V(G) \to \mathbb N$ be a capacity function.
Notice that we allow a vertex $v$ such that $c(v) = 0$.
A \emph{capacitated vertex cover} of $(G, c)$ is a pair $(C, \alpha)$ of a vertex set $C \subseteq V(G)$ and a function $\alpha \colon E(G) \to C$ such that $\alpha(\{u, v\})$ is either $u$ or $v$ and $|\alpha^{-1}(v)| \le c(v)$ for $v \in C$, where $\alpha^{-1}(v) = \{e \in E(G) : \alpha(e) = v\}$ is the set of edges mapped to $v$ under $\alpha$.
In other words, a vertex $v \in C$ covers at most $c(v)$ edges.
A \emph{capacitated dominating set} of $(G, c)$ is a pair $(D, \beta)$ of a vertex set $D$ and a function $\beta \colon V(G) \setminus D \to D$ such that $\beta(v)$ is a neighbor of $v$ and $|\beta^{-1}(v)| \le c(v)$ for $v \in D$, where $\beta^{-1}(v) = \{w \in V(G) \setminus D : \beta(w) = v\}$.
We simply refer to a vertex set $X \subseteq V(G)$ as a capacitated vertex cover (resp. capacitated dominating set) of $(G, c)$ if there is a function $\alpha \colon E(G) \to X$ (resp. $\beta \colon V(G) \setminus X \to X$) such that $(X, \alpha)$ is a capacitated vertex cover (resp. $(X, \beta)$ is a capacitated dominating set) of $(G, c)$.
Clearly, every capacitated vertex cover (resp. capacitated dominating set) of $(G, c)$ is a vertex cover (resp. dominating set) of $G$, and the converse holds when setting $c(v) = \Delta$ for $v \in V(G)$, where $\Delta$ is the maximum degree of a vertex in $G$.

Let $\mathcal S \subseteq 2^{V(G)}$ be a collection of vertex sets of $G$.
We say that $\mathcal S$ is \emph{monotone} if for $X, Y \subseteq V(G)$ with $X \subseteq Y$, $X \in \mathcal S$ implies $Y \in \mathcal S$.
It is easy to see that the collections of capacitated vertex covers and capacitated dominating sets of $(G, c)$ are monotone, while the collection of connected vertex sets (i.e., vertex sets that induce connected subgraphs of $G$) is not monotone.
The following proposition shows that the collections of connected vertex covers and connected dominating sets of $G$ are also monotone.

\begin{proposition}\label{prop:monotone}
    Let $G = (V, E)$ be a connected graph and let $X$ be a connected vertex cover (resp. connected dominating set) of $G$.
    Then, for any $Y \subseteq V(G)$ with $X \subseteq Y$, $Y$ is a connected vertex cover (resp. connected dominating set) of $G$ as well.
\end{proposition}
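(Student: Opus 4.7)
The plan is to reduce the statement to a single local observation: every vertex $v \in Y \setminus X$ has at least one neighbor in $X$. Once this is established, connectivity of $G[Y]$ follows immediately from connectivity of $G[X]$, since each vertex of $Y \setminus X$ can be attached to the already-connected subgraph $G[X]$ via such a neighbor, and hence lies in the same connected component as $X$ in $G[Y]$. The fact that $Y$ remains a vertex cover (resp.\ dominating set) is trivial, because both properties are monotone under taking supersets: any edge covered by $X$ is covered by $Y$, and any vertex dominated by $X$ is dominated by $Y$.

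For the connected dominating set case, the local observation is immediate from the definition. If $v \in Y \setminus X$, then in particular $v \notin X$, and since $X$ dominates $V(G)$, there exists some $u \in X$ with $u \in N_G(v)$. For the connected vertex cover case, I would argue as follows: for $v \in Y \setminus X$ and any edge $\{v,w\} \in E(G)$ incident to $v$, the vertex cover condition forces at least one endpoint to lie in $X$; since $v \notin X$, we must have $w \in X$. It then suffices to note that $v$ has at least one incident edge. This uses the hypothesis that $G$ is connected, which guarantees that every vertex has a neighbor as long as $|V(G)| \ge 2$; the degenerate case $|V(G)|=1$ makes the statement vacuous.

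With this observation in hand, the wrap-up is a one-line connectivity argument. For any $u, v \in Y$, either both lie in $X$, in which case a path in $G[X] \subseteq G[Y]$ between them exists by hypothesis, or at least one of them lies in $Y \setminus X$, in which case it is adjacent in $G[Y]$ to some vertex of $X$, and we concatenate with a path inside $G[X]$. Hence $G[Y]$ is connected.

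The main (and only) subtlety I anticipate is the trivial-graph corner case in the vertex cover argument, where a vertex of $Y \setminus X$ could a priori be isolated in $G$; the connectedness assumption on $G$ cleanly rules this out whenever $|V(G)| \ge 2$. Beyond that, the proof is a short direct verification and does not require any additional structural ideas.
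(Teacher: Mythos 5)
Your proof is correct and follows essentially the same route as the paper's: both rest on the observation that every vertex outside $X$ has a neighbor in $X$ (via the covering/domination property, plus connectedness of $G$ in the vertex cover case), from which connectivity of $G[Y]$ is immediate. Your write-up is slightly more explicit about the final path-concatenation step and the one-vertex corner case, but the underlying argument is identical.
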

\begin{proof}
    Let $X$ be a connected vertex cover of $G$.
    We assume that $G$ has at least two vertices, as otherwise the lemma is trivial.
    Moreover, we assume that $X$ has at least one vertex.
    As $G$ is connected and $X$ is a vertex cover of $G$, every vertex $v$ in $V(G) \setminus X$ has a neighbor in $X$.
    This implies that $G[X \cup \{v\}]$ is connected.
    The case of a connected dominating set is analogous.
\end{proof}

We also show that one can decide in polynomial time whether a given vertex set $X$ is a capacitated vertex cover (and a capacitated dominating set) of $(G, c)$.
\begin{proposition}\label{prop:feasibility-check}
    There are polynomial-time algorithms for checking whether a given vertex set $X$ is a capacitated vertex cover and is a capacitated dominating set of $(G, c)$, respectively.
\end{proposition}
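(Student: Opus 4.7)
The plan is to reduce each feasibility question to a standard bipartite $b$-matching (equivalently, a maximum flow) problem, which is solvable in polynomial time.

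For the capacitated vertex cover check, I would first verify in linear time that $X$ is indeed a vertex cover of $G$; if some edge has neither endpoint in $X$, we may reject. Assuming $X$ is a vertex cover, I would build a bipartite graph $H$ whose two sides are $E(G)$ and $X$, and in which an edge $e = \{u,v\} \in E(G)$ is joined to each of its endpoints that lies in $X$. A valid function $\alpha : E(G) \to X$ with $|\alpha^{-1}(v)| \le c(v)$ corresponds precisely to a subgraph of $H$ in which every vertex of $E(G)$ has degree exactly one and every vertex $v \in X$ has degree at most $c(v)$. This is a bipartite $b$-matching that saturates the $E(G)$ side, which I would solve by adding a source $s$ with unit-capacity arcs to each $e \in E(G)$, keeping the $H$-edges with infinite (or unit) capacity, and adding a sink $t$ with arcs of capacity $c(v)$ from each $v \in X$; $X$ is a capacitated vertex cover iff the maximum $s$-$t$ flow equals $|E(G)|$.

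For the capacitated dominating set check, I would first verify that $X$ is a dominating set (every $v \notin X$ has a neighbor in $X$). Then I would build an analogous bipartite graph between $V(G) \setminus X$ and $X$, in which $w \in V(G) \setminus X$ is joined to each of its neighbors in $X$, and check via the same max-flow formulation (unit capacities from the source to each $w \in V(G) \setminus X$, capacity $c(v)$ from each $v \in X$ to the sink) whether there is a flow of value $|V(G) \setminus X|$. A flow of this value yields the desired $\beta$, and conversely any valid $\beta$ yields such a flow.

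Both flow networks have polynomially many nodes and arcs and integral capacities bounded by $n$, so a standard polynomial-time max-flow algorithm decides feasibility in polynomial time, completing the proof. There is essentially no obstacle here: the only subtlety is to notice that the constraints in the definition of capacitated vertex cover and capacitated dominating set decompose cleanly as degree/capacity constraints on the two sides of a bipartite graph, at which point the reduction to bipartite $b$-matching is immediate.
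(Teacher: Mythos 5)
Your proposal is correct and follows essentially the same route as the paper: the paper also reduces both checks to bipartite matching, the only cosmetic difference being that it splits each vertex $v$ into $c(v)$ copies and asks for an ordinary matching saturating the edge (resp.\ non-dominated vertex) side, whereas you keep the capacities as $b$-matching/flow constraints. The two formulations are interchangeable by the standard vertex-splitting reduction, so there is nothing substantive to distinguish them.
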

\begin{proof}
    We first consider the case for capacitated vertex cover.
    We reduce this feasibility-checking problem to the bipartite matching problem as follows.
    The bipartite graph $H$ consists of two independent sets $V_E$ and $V$.
    The first set $V_E$ is defined as $V_E = \{w_e : e \in E(G)\}$, that is, $V_E$ contains a vertex $w_e$ for each edge $e$ in $G$.
    The second set $V$ is defined as $V = \{w^i_v : v \in V(G), 1 \le i \le c(v)\}$, that is, $V$ contains $c(v)$ vertices for each vertex $v$ in $V(G)$.
    For $v \in V(G)$, $1 \le i \le c(v)$, and $e \in E(G)$, we add an edge between $w_e$ and $w^i_v$ if $e$ is incident to $v$ in $G$.
    The graph constructed in this way is indeed bipartite, which we denote by $H$.
    
    It is easy to observe that $(G, c)$ has a capacitated vertex cover $(X, \alpha)$ if and only if $H[V_E \cup \{w^i_v : v \in X, 1 \le i \le c(v)\}]$ has a matching $M$ saturating $V_E$, as the function $\alpha$ is straightforwardly defined from the matching 
    and vice versa.

    For capacitated dominating set, we modify the construction of $H$ as follows: Replace $V_E$ with $V' = \{w_v : v \in V\}$ and add edges between $w_{u} \in V'$ and $w^i_v \in V$ for all $1 \le i \le c(v)$ if and only if $\{u, v\} \in E(G)$.
    Similarly, $(G, c)$ has a capacitated dominating set $(X, \beta)$ if and only if $H[\{w_v : v \in V(G) \setminus X\} \cup \{w^i_v : v \in X, 1 \le i \le c(v)\}]$ has a matching saturating $\{w_v : v \in V(G) \setminus X\}$.

    We can find a maximum cardinality bipartite matching in polynomial time, proving this proposition.
\end{proof}

For a non-negative integer $k$, a vertex ordering $(v_1, \dots, v_n)$ of $G$ is called a \emph{$k$-degenerate ordering} of $G$ if each vertex $v_i$ has at most $k$ neighbors in $G[\{v_i, \dots, v_n\}]$.
A graph is said to be \emph{$k$-degenerate} if it admits a $k$-degenerate ordering~\cite{Lick:CJM:1970}. 

\section{A quick tour of the supergraph technique}
Before proceeding to our algorithms, we quickly review the \emph{supergraph technique} (also known as \emph{$X - e + Y$ method}), which is frequently used in designing enumeration algorithms~\cite{Boros:COCOON:Generating:2007,Khachiyan:ESA:Enumerating:2006,ConteGMUV:SICOMP:Proximity:2022,CohenKS:JCSS:Generating:2008,SchwikowskiS:DAM:enumerating:2002}.
The crux is summarized below in \Cref{thm:supergraph}.

Let $\mathcal S \subseteq 2^U$ be a collection of subsets of a finite set $U$.
Here, we assume that every pair of distinct sets in $\mathcal S$ is incomparable with respect to set inclusion, that is, for $X, Y \in \mathcal S$, both $X \setminus Y$ and $Y \setminus X$ are nonempty unless $X = Y$.
The basic idea to the supergraph technique is to define a strongly connected directed graph $\mathcal D = (\mathcal S, \mathcal A)$ on $\mathcal S$ by appropriately defining an arc set $\mathcal A$.
Given this directed graph, we can enumerate all sets in $\mathcal S$ by solely traversing $\mathcal D$ from an arbitrary $S \in \mathcal S$.
To this end, we need to define the directed graph $\mathcal D$ so that it is strongly connected.

We first observe that, under the above incomparability on $\mathcal S$, for $X, Y \in \mathcal S$, $|X \setminus Y| = 0$ if and only if $X = Y$.
We define a set of arcs of $\mathcal{D}$ in such a way that for any two $X, Y \in \mathcal{S}$ with $X \neq Y$, $X$ has an outgoing arc to some $Z \in \mathcal{S}$ such that $|X \setminus Y| > |Z \setminus Y|$.
The (out-)neighborhood of $X$ defined in this way is denoted by $\mathcal N^+(X)$.
By the above observation, this directed graph $\mathcal D$ is strongly connected, as we can inductively show that there is a directed path from $X$ to $Y$ in $\mathcal D$ via $Z \in \mathcal N^+(X)$.
This idea is formalized as follows.
\begin{theorem}[e.g., \cite{KobayashiKW:arXiv:Efficient:2020,CohenKS:JCSS:Generating:2008,ConteGMUV:SICOMP:Proximity:2022}]\label{thm:supergraph}
    Suppose that for $X, Y \in \mathcal S$ with $X \neq Y$, the neighborhood $\mathcal N^+(X)$ of $X$ contains a set $Z \in \mathcal S$ such that $|X \setminus Y| > |Z \setminus Y|$.
    Moreover, suppose that given $X \in \mathcal S$, we can compute the neighborhood of $X$ in (total) time $T(n)$, where $n = |U|$.
    Then, given an arbitrary initial set $S \in \mathcal S$, we can enumerate all sets in $\mathcal S$ in delay $T(n)\cdot n^{O(1)}$.
\end{theorem}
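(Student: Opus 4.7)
The plan is to define a strongly connected digraph $\mathcal{D} = (\mathcal{S}, \mathcal{A})$ on $\mathcal{S}$, with arcs given by $\mathcal{N}^+$, and then traverse it from the initial set $S$ using a BFS variant that emits its output on dequeue rather than on first discovery. Strong connectivity follows from the hypothesis by induction on the measure $|X \setminus Y|$: whenever $X \neq Y$, the incomparability assumption forces $|X \setminus Y| \ge 1$; the hypothesis then supplies some $Z \in \mathcal{N}^+(X)$ with $|Z \setminus Y| < |X \setminus Y|$, and the induction hypothesis gives a directed path from $Z$ to $Y$ in $\mathcal{D}$, which we prepend with the arc $(X, Z)$. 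The base case $|X \setminus Y| = 0$ forces $X \subseteq Y$, and hence $X = Y$ by incomparability again.

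For the traversal I would maintain a FIFO queue $Q$ and a dictionary $\mathcal{V}$ of already-enqueued sets, implemented as a trie over characteristic vectors in $\{0,1\}^U$ so that insertion and membership cost $n^{O(1)}$ time. Initially $S$ is placed in both. Each iteration dequeues a set $X$, outputs $X$, computes $\mathcal{N}^+(X)$ in time $T(n)$, and for every $Z \in \mathcal{N}^+(X) \setminus \mathcal{V}$ inserts $Z$ into $\mathcal{V}$ and enqueues it. Strong connectivity of $\mathcal{D}$ guarantees that every $Y \in \mathcal{S}$ is eventually enqueued and hence output exactly once.

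The delay analysis is then nearly immediate. Each set in $\mathcal{S}$ is dequeued at most once, and the work between two consecutive outputs consists of a single dequeue, one evaluation of $\mathcal{N}^+$ of cost $T(n)$, and at most $|\mathcal{N}^+(X)| \le T(n)$ dictionary operations of cost $n^{O(1)}$ each, giving a total delay of $T(n)\cdot n^{O(1)}$. The same bound also covers the preprocessing before the first output and the postprocessing after the last one.

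The main obstacle I anticipate is this delay bound rather than correctness. A naive ``output on first discovery'' BFS would be safe but might pile up a long run of dequeues whose neighborhoods are already fully explored, yielding only an amortized polynomial delay. Moving the output to dequeue time while marking as visited at enqueue time pays for each costly neighborhood computation with exactly one output, thereby converting the amortized bound into the worst-case polynomial-delay guarantee claimed by the theorem.
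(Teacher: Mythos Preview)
Your proposal is correct and follows the standard approach: the paper itself does not prove this theorem but cites it from the literature, giving only the inductive strong-connectivity argument in the paragraph preceding the statement, which you reproduce faithfully. Your additional BFS-with-output-on-dequeue argument for the delay bound is the expected way to flesh out such a citation, and your handling of the potential pitfall (dequeues with no new neighbors) is the right one.
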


Due to \Cref{thm:supergraph}, it suffices to define such a polynomial-time computable neighborhood $\mathcal N^+\colon \mathcal S \to 2^{\mathcal S}$.

\section{\textsc{Minimal Connected Vertex Cover Enumeration}}\label{sec:convc}

\subsection{Bounded-degree graphs}
In order to enumerate all minimal connected vertex covers in a graph, it suffices to construct a directed graph discussed in the previous section.
To be more precise, let $G$ be a connected graph with maximum degree $\Delta$ and let $\convc$ be the collection of all minimal connected vertex covers of $G$.
Let $X, Y \in \convc$ with $X \neq Y$.
By the minimality of $X$ and $Y$, there is a vertex $v \in X \setminus Y$.
As $Y$ is a vertex cover of $G$ not including $v$, all the neighbors of $v$ are included in $Y$ (i.e., $N(v) \subseteq Y$).
We define a vertex cover $X'$ of $G$ as $X' \coloneqq (X \setminus \{v\}) \cup N(v)$.
By the connectivity of $G[X]$, we have the following observation.

\begin{observation}
    Each component of $G[X']$ has at least one vertex of $N(v)$.
\end{observation}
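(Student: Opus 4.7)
The plan is to argue by contradiction. Suppose some component $C$ of $G[X']$ contains no vertex of $N(v)$. Since $C \subseteq X' = (X \setminus \{v\}) \cup N(v)$, this forces $C \subseteq X \setminus (N(v) \cup \{v\})$. The goal is then to use the connectivity of $G[X]$ (which holds because $X$ is a connected vertex cover) to push a vertex of $N(v)$ into $C$, yielding a contradiction.

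Concretely, I would fix any $c \in C$ and invoke connectivity of $G[X]$ to obtain a path $c = u_0, u_1, \ldots, u_k = v$ inside $G[X]$. Because $c \notin N(v)$ we have $u_0 \notin N(v)$, and because $u_{k-1}$ is adjacent to $v$ we have $u_{k-1} \in N(v)$; hence there is a smallest index $i$ with $u_i \in N(v)$, satisfying $1 \le i \le k-1$. By the minimality of $i$, the vertices $u_0, \ldots, u_{i-1}$ lie in $X \setminus \{v\}$ and hence in $X'$, while $u_i$ lies in $N(v) \subseteq X'$. Since consecutive pairs along the segment are adjacent in $G$ by construction, the initial segment $u_0, \ldots, u_i$ is a walk in $G[X']$ from $c$ to $u_i$. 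Consequently $u_i \in C$, which contradicts $C \cap N(v) = \emptyset$.

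The main subtlety to watch is the treatment of vertices in $N(v) \setminus X$: they are freshly inserted to form $X'$ and need not be adjacent, inside $G[X']$, to any particular vertex of $X \setminus \{v\}$. In particular one cannot simply claim that each component of $G[X \setminus \{v\}]$ persists as a component of $G[X']$, because adding the new neighbors may fuse several components or leave isolated singletons from $N(v) \setminus X$. The argument above sidesteps this issue by travelling entirely through $G[X]$ and invoking a vertex of $N(v)$ only at the very last step, so the precise way in which the newly added neighbors attach to the rest of $X'$ is immaterial.
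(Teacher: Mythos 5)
Your proof is correct and is exactly the argument the paper leaves implicit: the observation is stated with only the remark ``by the connectivity of $G[X]$,'' and your path-from-$c$-to-$v$ argument, truncated at the first vertex of $N(v)$, is the standard way to make that remark precise. The subtleties you flag (vertices of $N(v)\setminus X$, possible fusion of components) are real but, as you note, harmlessly bypassed by routing the path through $G[X]$.
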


This observation indicates that the number of connected components in $G[X']$ is upper bounded by $\Delta$.

\begin{lemma}\label{lem:convc:augment}
    Let $S \subseteq V(G)$ be an arbitrary vertex cover of $G$ and let $Y$ be a minimal connected vertex cover of $G$.
    Suppose that each component in $G[S]$ has at least one vertex of $Y$.
    Then, there is a vertex set $W \subseteq Y$ of at most $q-1$ vertices such that $G[S \cup W]$ is connected, where $q$ is the number of connected components in $G[S]$.
\end{lemma}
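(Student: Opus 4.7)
I will prove the lemma by induction on $q$. The base case $q = 1$ is immediate by taking $W = \emptyset$. For the inductive step with $q \ge 2$, the plan is to find a single vertex $w \in Y \setminus S$ whose inclusion in $S$ merges at least two components of $G[S]$; then $S' \coloneqq S \cup \{w\}$ remains a vertex cover satisfying the hypotheses of the lemma, and has strictly fewer than $q$ components, so the inductive hypothesis yields some $W' \subseteq Y$ with $|W'| \le q - 2$ such that $G[S' \cup W']$ is connected. Taking $W = \{w\} \cup W'$ then gives $|W| \le q - 1$ and $G[S \cup W] = G[S' \cup W']$ connected, as required.

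To locate such a $w$, I will exploit the connectivity of $G[Y]$. Fix one component $C_1$ of $G[S]$ and set $A = Y \cap C_1$ and $B = Y \cap \bigcup_{j \ne 1} C_j$; both are nonempty by the assumption that every component of $G[S]$ contains a vertex of $Y$. Choose a \emph{shortest} path $P = (y_0, y_1, \ldots, y_k)$ in $G[Y]$ with $y_0 \in A$ and $y_k \in B$. The crux is to show $k = 2$: then $w \coloneqq y_1$ will lie in $Y \setminus S$ and will be adjacent to both $y_0 \in C_1$ and $y_2 \in C_j$ (for some $j \ne 1$), so it is the desired merging vertex. Two observations yield $k = 2$. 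First, by the minimality of $P$, no internal vertex can lie in $A \cup B = Y \cap S$ (otherwise a strictly shorter $A$-$B$ subpath would exist), so $y_1, \ldots, y_{k-1} \in Y \setminus S$. Second, since $S$ is a vertex cover, $V(G) \setminus S$ is an independent set of $G$, so $P$ cannot have two consecutive vertices in $Y \setminus S$. These together force $k \le 2$, while $k = 1$ is ruled out because $y_0$ and $y_1$ would then be adjacent vertices of $S$ lying in distinct components of $G[S]$, a contradiction.

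The main obstacle I anticipate is precisely this $k = 2$ deduction: it is where both the shortest-path minimality and the vertex-cover property of $S$ play essential roles, and it is the only nontrivial ingredient in the argument. Notably, I do not expect to need the full strength of $Y$ being a \emph{minimal} connected vertex cover; only the connectivity of $G[Y]$ together with the hypothesis that $Y$ meets every component of $G[S]$ will be used. Once $k = 2$ is established, the remaining bookkeeping is routine: $G[S']$ has at most $q - 1$ components (since $w$ is adjacent to vertices of at least two distinct components of $G[S]$), and each component of $G[S']$ still contains a vertex of $Y$ (the new merged component contains $w \in Y$ itself, and the unmerged ones retain their original $Y$-vertex), so the inductive hypothesis applies cleanly.
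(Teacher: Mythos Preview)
Your proof is correct and follows essentially the same inductive strategy as the paper: find a single vertex $w \in Y$ whose addition merges two components of $G[S]$, then recurse. The only cosmetic difference is that the paper minimizes the length of a path in $G[Y]$ over \emph{all pairs} of components $C,C'$, whereas you fix one component $C_1$ and take a shortest $G[Y]$-path from $Y\cap C_1$ to $Y\cap\bigcup_{j\neq 1}C_j$; both formulations force the internal vertices to lie in $Y\setminus S$ and hence (via the independence of $V(G)\setminus S$) yield a path of length exactly~$2$.
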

\begin{proof}
    We prove the lemma by induction on the number of connected components $q$ in $G[S]$.
    If $G[S]$ is connected, we are done.
    Suppose otherwise.
    Let $C$ and $C'$ be distinct components in $G[S]$.
    Since both $C \cap Y$ and $C' \cap Y$ are nonempty, there is a path $P$ in $G[Y]$ connecting $C \cap Y$ and $C' \cap Y$.
    We choose $C$ and $C'$ to minimize the length of such a path $P$.
    Since $S$ is a vertex cover of $G$, the length of $P$ is exactly $2$.
    Let $w$ be the (unique) internal vertex of $P$.
    Applying the induction to $S \cup \{w\}$ proves the lemma.
\end{proof}

We say that a vertex set $W$ is a \emph{valid augmentation} for $X'$ if $G[X' \cup W]$ is connected.
In particular, a valid augmentation $W$ for $X'$ is called a \emph{$Y$-valid augmentation} for $X'$ if $W \subseteq Y$.
Thus, \Cref{lem:convc:augment} ensures that there is a $Y$-valid augmentation $W$ for $X'$ of size at most $\Delta - 1$ for any $Y \in \convc$.
As $v \in X \setminus Y$, $N(v) \subseteq Y$, and $W \subseteq Y$, we have
\begin{align*}
    |X \setminus Y| &\ge |(X' \cup \{v\}) \setminus (N(v) \cup Y)|
    > |X' \setminus Y| 
    = |(X' \cup W) \setminus Y| 
    \ge |Z \setminus Y|,
\end{align*}
where $Z$ is an arbitrary minimal connected vertex cover of $G$ with $Z \subseteq X' \cup W$.

Now, we formally define $\mathcal N^+(X)$ for $X \in \convc$.
For a connected vertex cover $C$ of $G$, we let $\mu(C)$ be an arbitrary minimal connected vertex cover of $G$ with $\mu(C) \subseteq C$.
By~\Cref{prop:monotone}, $\mu(C)$ can be computed in polynomial time by greedily removing vertices from $C$.
We define
\begin{align*}
    \mathcal N^+(X) &= \{\mu((X \setminus \{v\}) \cup N(v) \cup W) :\\
    &\hspace{1cm} v \in X, W \text{ is a valid augmentation for } (X \setminus \{v\}) \cup N(v), |W| \le \Delta - 1\},
\end{align*}
which can be computed in $n^{\Delta + O(1)}$ time by just enumerating all valid augmentations $W$ of size at most $\Delta - 1$.
As discussed above, for each $Y \in \convc$ with $X \neq Y$, $\mathcal N^+(X)$ has a minimal connected vertex cover $Z \in \convc$ such that $|X \setminus Y| > |Z \setminus Y|$.
By~\Cref{thm:supergraph}, we have the following.

\begin{theorem}\label{thm:convc:degree}
    There is an $n^{\Delta + O(1)}$-delay enumeration algorithm for \textsc{Minimal Connected Vertex Cover Enumeration}.
\end{theorem}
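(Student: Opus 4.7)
The plan is to apply the supergraph framework of \Cref{thm:supergraph} with the map $\mathcal{N}^+$ already defined just above the theorem. Two things remain to verify: (a) $\mathcal{N}^+(X) \subseteq \convc$ and $\mathcal{N}^+(X)$ can be listed in total time $n^{\Delta + O(1)}$; and (b) for every $Y \in \convc$ with $Y \neq X$, some $Z \in \mathcal{N}^+(X)$ satisfies $|X \setminus Y| > |Z \setminus Y|$.

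For (a) I would enumerate each $v \in X$ together with each candidate set $W \subseteq V(G)$ of size at most $\Delta - 1$, which gives $n^{\Delta + O(1)}$ pairs. For each pair I check in polynomial time whether $G[(X \setminus \{v\}) \cup N(v) \cup W]$ is connected and, when it is, I compute $\mu$ of this set. Note that $(X \setminus \{v\}) \cup N(v)$ is already a vertex cover (any edge formerly covered only by $v$ is now covered by its other endpoint, which lies in $N(v)$), so adding $W$ keeps it a vertex cover; hence $\mu$ is applied to a connected vertex cover. By \Cref{prop:monotone}, $\mu$ can be realized by greedily removing vertices one at a time and testing in polynomial time whether the remaining set is still a connected vertex cover, so the entire neighborhood is produced in $n^{\Delta + O(1)}$ time and every element lies in $\convc$ by construction.

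For (b) I would follow the derivation sketched in the paragraphs preceding the theorem. Pick any $v \in X \setminus Y$, which exists because distinct minimal sets are incomparable. Since $Y$ is a vertex cover and $v \notin Y$, all of $N(v)$ lies in $Y$. Set $X' = (X \setminus \{v\}) \cup N(v)$; by the observation immediately preceding \Cref{lem:convc:augment}, every component of $G[X']$ contains a vertex of $N(v) \subseteq Y$, and there are at most $|N(v)| \le \Delta$ such components. Applying \Cref{lem:convc:augment} with $S = X'$ yields a $Y$-valid augmentation $W \subseteq Y$ of size at most $\Delta - 1$, so $(v, W)$ is a legitimate pair in the definition of $\mathcal{N}^+(X)$ and $Z := \mu(X' \cup W)$ belongs to $\mathcal{N}^+(X)$.

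Finally I would verify the strict decrease in $|\cdot \setminus Y|$. Because $N(v) \subseteq Y$ and $v \notin Y$, a short set-theoretic calculation gives $X' \setminus Y = (X \setminus Y) \setminus \{v\}$, so $|X' \setminus Y| = |X \setminus Y| - 1$. Since $W \subseteq Y$ we have $(X' \cup W) \setminus Y = X' \setminus Y$, and since $Z \subseteq X' \cup W$ we obtain $|Z \setminus Y| \le |X' \setminus Y| < |X \setminus Y|$. Feeding $T(n) = n^{\Delta + O(1)}$ into \Cref{thm:supergraph} then gives the claimed $n^{\Delta + O(1)}$ delay. I do not foresee a genuine obstacle; the only care required is to bookkeep the set differences correctly and to confirm that the brute-force enumeration over $W$ of size at most $\Delta - 1$ indeed fits within the stated time budget.
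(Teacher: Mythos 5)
Your proposal is correct and follows essentially the same route as the paper: pick $v \in X \setminus Y$, replace it by $N(v) \subseteq Y$, use the observation that $G[(X\setminus\{v\})\cup N(v)]$ has at most $\Delta$ components together with \cref{lem:convc:augment} to obtain a $Y$-valid augmentation of size at most $\Delta-1$, and then invoke \cref{thm:supergraph} with the $n^{\Delta+O(1)}$-time brute-force enumeration of augmentations. Your set-theoretic verification of the strict decrease $|Z\setminus Y| < |X\setminus Y|$ is a slightly cleaner rewrite of the chain of (in)equalities in the paper, but the argument is the same.
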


\paragraph{Extension to $d$-claw free graphs.}
The running time of the algorithm in \Cref{thm:convc:degree} depends, in fact, on the number of connected components in $G[X']$ with $X' = (X \setminus \{v\}) \cup N(v)$ rather than the degree of $v$.
We can extend this idea as follows.
For $d \in \mathbb N$, a \emph{$d$-claw} is a star graph with $d$ leaves.
A graph $G$ is said to be \emph{$d$-claw free} if it has no $d$-claw as an induced subgraph.
Clearly, every graph with maximum degree $\Delta$ is $(\Delta + 1)$-claw free.
Suppose that $G$ is $d$-claw free.
As the class of $d$-claw free graphs is hereditary, every induced subgraph of $G$ is also $d$-claw free.
A crucial observation to our extension is that for every vertex $v \in X$, $G[(X \setminus \{v\}) \cup N(v)]$ has at most $d - 1$ components as otherwise $G$ has an induced $d$-claw with center~$v$.
This observation shows that there always exists a valid augmentation for $(X \setminus \{v\}) \cup N(v)$ of size~$d - 2$, which yields a polynomial-delay algorithm for $d = O(1)$.

\begin{theorem}\label{thm:convc:claw}
    There is an $n^{d + O(1)}$-delay enumeration algorithm for \textsc{Minimal Connected Vertex Cover Enumeration}, provided that the input graph $G$ is $d$-claw free.
\end{theorem}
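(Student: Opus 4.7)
The plan is to mirror the proof of \Cref{thm:convc:degree} almost verbatim, replacing the bound $\Delta$ on the number of components of $G[X'] = G[(X\setminus\{v\})\cup N(v)]$ with $d-1$ via the $d$-claw free hypothesis. Fix $X, Y \in \convc$ with $X \neq Y$ and pick $v \in X\setminus Y$; exactly as before, $N(v) \subseteq Y$ and I set $X' = (X\setminus\{v\})\cup N(v)$.

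The key new step is to show $G[X']$ has at most $d-1$ connected components. I would reuse the earlier Observation that every component of $G[X']$ contains some vertex of $N(v)$, and then argue contrapositively: if $G[X']$ had $q \ge d$ components $C_1,\dots,C_q$, choose $u_i \in C_i \cap N(v)$ for each $i$. Vertices from distinct components of an induced subgraph are non-adjacent in $G$, so $u_1,\dots,u_q$ are pairwise non-adjacent, while $v$ is adjacent to every $u_i$ by construction. Hence $G[\{v, u_1, \dots, u_q\}]$ is an induced star with $q \ge d$ leaves, contradicting $d$-claw freeness of $G$.

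Once this bound is in hand, I would invoke \Cref{lem:convc:augment} with $S = X'$ and $q \le d-1$ to obtain a $Y$-valid augmentation $W \subseteq Y$ for $X'$ with $|W| \le d-2$. I then define $\mathcal N^+(X)$ exactly as in the bounded-degree case but quantifying over valid augmentations $W$ of size at most $d-2$ (instead of $\Delta - 1$); the same chain of inequalities that already appears before \Cref{thm:convc:degree} produces a $Z \in \mathcal N^+(X)$ with $|X\setminus Y| > |Z\setminus Y|$. Since there are at most $n^{d-2}$ candidate sets $W$ and $n$ choices for $v$, and since $\mu(\cdot)$ is computable in polynomial time by \Cref{prop:monotone}, the neighborhood $\mathcal N^+(X)$ is computable in $n^{d+O(1)}$ time; \Cref{thm:supergraph} then yields the claimed delay.

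The only substantive obstacle is the component-count argument, and the subtlety there is to invoke the hereditary nature of $d$-claw freeness implicitly: the induced star lives in $G$, not merely in $G[X' \cup \{v\}]$, but since the $u_i$ are chosen from distinct components of an induced subgraph of $G$, their pairwise non-adjacency holds in $G$ itself, so the $d$-claw obstruction is genuine. Everything else is a line-by-line transcription of the proof of \Cref{thm:convc:degree} with $\Delta$ replaced by $d-1$.
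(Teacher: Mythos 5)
Your proposal is correct and matches the paper's own argument essentially verbatim: the paper likewise observes that every component of $G[(X\setminus\{v\})\cup N(v)]$ meets $N(v)$, deduces that more than $d-1$ components would yield an induced $d$-claw centered at $v$, and then reruns the bounded-degree machinery with augmentations of size at most $d-2$. Your explicit contrapositive argument for the component bound (pairwise non-adjacency of representatives from distinct components of an induced subgraph) is exactly the justification the paper leaves implicit.
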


\subsection{General graphs}
We complement the polynomial-delay enumeration algorithm in the previous subsection by showing that \textsc{Minimal Connected Vertex Cover Enumeration} is ``not so easy'', even on bipartite graphs.
For a hypergraph $\mathcal H = (V, \mathcal E)$, a \emph{transversal} of $\mathcal H$ is a vertex subset $S \subseteq V$ such that each hyperedge contains at least one vertex of~$S$.
Recall that no output-polynomial time algorithm for enumerating minimal transversals in hypergraphs is known so far, and several similar ``hardness'' results are discussed in the literature~\cite{ConteGKMUW19,BergougnouxDI23}.

\begin{theorem}\label{thm:cvc:mds-hard}
    If there is an output-polynomial time (or a polynomial-delay) algorithm for \textsc{Minimal Connected Vertex Cover Enumeration} on $2$-degenerate bipartite graphs,
    then there is an output-polynomial time (or a polynomial-delay) algorithm for enumerating minimal transversals in hypergraphs.
\end{theorem}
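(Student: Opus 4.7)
The plan is to give a polynomial-time reduction from enumerating minimal transversals of a hypergraph $\mathcal{H} = (V, \mathcal{E})$ to enumerating minimal connected vertex covers of a $2$-degenerate bipartite graph $G$. I will construct $G$ from $\mathcal{H}$ in polynomial time so that the minimal CVCs of $G$ are in an explicit, polynomial-time computable bijection with the minimal transversals of $\mathcal{H}$; running the hypothesised algorithm on $G$ and projecting each output onto $V$ then enumerates the minimal transversals within the same complexity class (output-polynomial or polynomial-delay).

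The construction uses three kinds of gadgets. First, a \emph{global spine} $s_1 - t_1 - s_2 - t_2 - \cdots - t_{n-1} - s_n$ together with an edge $\{v_i, s_i\}$ for every $v_i \in V$. Second, for each hyperedge $e = \{v_{i_1}, \ldots, v_{i_k}\}$ a \emph{chain gadget} $u_{e,1} - q_{e,1} - u_{e,2} - q_{e,2} - \cdots - u_{e,k}$ with an edge $\{u_{e,j}, v_{i_j}\}$ for each $j$. Third, a private pendant attached to each of the vertices $s_i$, $t_i$, $u_{e,j}$, and $q_{e,j}$; the pendants' sole role is to \emph{force} their unique neighbour into every minimal CVC. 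Placing $V$, the $t_i$'s, the $q_{e,j}$'s, and the pendants of $s$- and $u$-vertices on one side of the bipartition, and the remaining vertices on the other, is easily checked to give a proper bipartition. For $2$-degeneracy, the peeling order ``pendants, then all $t_i$ and $q_{e,j}$, then all $s_i$ and $u_{e,j}$, then all $v_i$'' always removes a vertex of current degree at most $2$, certifying that $G$ is $2$-degenerate.

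The correspondence is proved in three steps. \textbf{Step 1.} In any minimal CVC $C$, no pendant $p$ can lie in $C$: putting $p$ into $C$ would force its unique neighbour into $C$ too (otherwise $p$ would be isolated in $G[C]$), making $p$ a removable leaf of $G[C]$ that contradicts minimality. The pendant edge then forces the neighbour, so the set $F$ of all $s_i$, $t_i$, $u_{e,j}$, $q_{e,j}$ is contained in every minimal CVC of $G$. \textbf{Step 2.} The subgraph $G[F]$ consists of the spine path plus one disjoint chain per hyperedge. The only non-chain, non-pendant neighbour of a chain vertex $u_{e,j}$ is $v_{i_j}$, and $v_{i_j}$'s only spine neighbour is $s_{i_j} \in F$; so a chain for $e$ is linked to the spine inside $G[C]$ exactly when some $v \in e \cap C$. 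Connectedness of $G[C]$ therefore forces $T := C \cap V$ to be a transversal of $\mathcal{H}$, and conversely $F \cup T$ is a connected vertex cover of $G$ whenever $T$ is a transversal. \textbf{Step 3.} For $v \in T$, removing $v$ from $C$ never destroys the vertex-cover property, since every neighbour of $v$ in $G$ already lies in $F \subseteq C$; it preserves connectivity exactly when every hyperedge containing $v$ still has a representative in $T \setminus \{v\}$, i.e.\ when $T \setminus \{v\}$ is still a transversal. Thus $C = F \cup T$ is minimal iff $T$ is, yielding the required bijection.

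The main obstacle is decoupling $2$-degeneracy from the transversal correspondence: the naive incidence bipartite graph of $\mathcal{H}$ is not $2$-degenerate whenever $\mathcal{H}$ has high-degree vertices or large hyperedges, so a hyperedge cannot be modelled by a single vertex adjacent to all of its elements. The chain gadget resolves this by replacing that star by a path of degree-at-most-$3$ vertices, and the global spine is essential so that the CVC's connectivity requirement transmits exactly the transversal condition to $V$ and not a stronger one; without it, two chains for disjoint hyperedges would have to be joined through $V$, inflating $C \cap V$ beyond a minimal transversal.
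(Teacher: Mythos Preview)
Your proof is correct and follows the same high-level strategy as the paper: build a bipartite graph in which a fixed set $F$ of ``forced'' vertices (forced via private pendants) already covers all edges, so that the minimality and connectivity of a CVC $F\cup T$ encodes exactly the minimal-transversal property of $T\subseteq V$; then replace each hyperedge by a pendant-decorated path so that the result is $2$-degenerate.

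The one place where your construction genuinely differs is in how $V$ is wired into $F$. The paper keeps a single hub vertex $r$ adjacent to all of $V$ (with its own pendant $r'$); this vertex has degree $|V|$, but the graph is still $2$-degenerate because $r$ can be peeled last. You instead spread the hub out into a pendant-forced spine $s_1\!-\!t_1\!-\!\cdots\!-\!s_n$ with $v_i$ attached to $s_i$. Both choices work; the paper's version is slightly shorter to describe (one vertex instead of $2n-1$), while yours has the incidental advantage that every forced vertex has bounded degree, and it lets you give the $2$-degenerate construction in one shot rather than first proving the bijection for a non-$2$-degenerate intermediate graph and then patching in the path gadgets. The hyperedge chain $u_{e,1}\!-\!q_{e,1}\!-\!\cdots\!-\!u_{e,k}$ in your construction is essentially the same device as the paper's path $P_e=(p^e_1,\ldots,p^e_{2d+1})$, with the $q$-vertices playing the role of the even-indexed $p^e_{2j}$.
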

\begin{proof}
    We first give a construction of a bipartite graph $G$ from a hypergraph $\mathcal H$ and show that there is a bijection between the collection of minimal connected vertex covers of $G$ and that of minimal transversals of~$\mathcal H$.
    We then transform the bipartite graph $G$ into a $2$-degenerate bipartite graph while keeping this bijection.
    
    Let $\mathcal H = (V, \mathcal E)$ be a hypergraph.
    We construct a bipartite graph $G$ as follows.
    We start with the incidence bipartite graph of $\mathcal H$, that is, the vertex set of $G$ consists of two independent sets $V$ and $V_{\mathcal E} = \{w_e : e \in \mathcal E\}$, and two vertices $v \in V$ and $w_e \in V_{\mathcal E}$ are adjacent if and only if $v \in e$.
    From this bipartite graph, we add a pendant vertex $w_e'$ that is only adjacent to $w_e$ for each $w_e \in V_{\mathcal E}$ and a vertex $r$ that is adjacent to all vertices in $V$.
    To complete the construction of $G$, we add a pendant vertex $r'$ that is adjacent to $r$.
    See \Cref{fig:convc:reduction}~(a) for an illustration.
    \begin{figure}[t]
        \centering
        \includegraphics{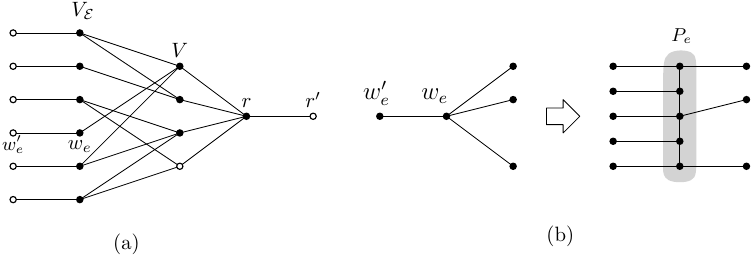}
        \caption{The left figure illustrates the constructed graph $G$. Filled circles indicate a minimal connected vertex cover of $G$. The right figure illustrates a part of the construction of a $2$-degenerate graph $G'$ from $G$.}
        \label{fig:convc:reduction}
    \end{figure}
    In the following, we show that there is a bijection between the collection of minimal connected vertex covers of $G$ and that of minimal transversals of $\mathcal H$.

    Let $C$ be an arbitrary minimal connected vertex cover of $G$.
    We first observe that $w_e \in C$ and $w_e' \notin C$ for all $e \in \mathcal E$.
    To see this, fix a hyperedge $e \in \mathcal E$.
    As $C$ is a vertex cover of $G$, at least one of $w_e$ and $w'_e$ is contained in $C$.
    If both of them are contained in $C$, $C \setminus \{w_e'\}$ is still a connected vertex cover of $G$, contradicting to the minimality.
    Thus, exactly one of $w_e$ and $w_e'$ is contained in $C$.
    As $G[C]$ is connected, $w_e$ must be contained in $C$.
    Similarly, we have $r \in C$ and $r' \notin C$.

    Let $S = C \cap V$.
    It is easy to see that $C \setminus S = V_{\mathcal E} \cup \{r\}$ is a vertex cover of~$G$.
    Due to the connectivity of $C$, for every $e \in \mathcal E$, $C$ contains at least one vertex of $V$ that is adjacent to $w_e$.
    This implies that $S$ is a transversal of $\mathcal H$.
    If there is $v \in S$ such that $S \setminus \{v\}$ is a transversal of $\mathcal H$, $C \setminus \{v\}$ is a connected vertex cover of $G$ as every $w_e \in \mathcal V_{\mathcal E}$ has a neighbor in $S \setminus \{v\}$, contradicting the minimality of~$C$.
    Thus, $S$ is a minimal transversal of $\mathcal H$.
    It is straightforward to verify that this relation is reversible, that is, for every minimal transversal $S$ of $\mathcal H$, $S \cup V_{\mathcal E} \cup \{r\}$ is a minimal connected vertex cover of $G$.

    From a hypergraph $\mathcal H$, we can construct the graph $G$ in polynomial time.
    Therefore, if there is an output-polynomial time algorithm for \textsc{Minimal Connected Vertex Cover Enumeration}, we can enumerate all minimal transversals of $\mathcal H$ in output-polynomial time as well.

    Now, we modify a bipartite graph $G$ into a $2$-degenerate bipartite graph $G'$.
    Let $w_e$ be a vertex in $V_{\mathcal E}$ and $\{u_1, \ldots, u_{d}\} \subseteq V$ be the set of vertices that are contained in $e$.
    We replace $w_e \in V_{\mathcal E}$ and the pendant vertex $w'_e$ with a path $P_e = (p^e_1, \ldots, p^e_{2d+1})$ and $2d + 1$ pendant vertices, each of which is adjacent to $p^e_i$ for each $p^e_i$.
    Furthermore, we add an edge $\{u_j, p^e_{2j-1}\}$ for each $1 \le j \le d$.
    \Cref{fig:convc:reduction}~(b) illustrates this replacement.
    We do this for each $e \in \mathcal E$, and it is easy to observe that the graph $G'$ obtained in this way is bipartite.
    Moreover, we claim that $G$ is $2$-degenerate: we first eliminate all the pendant vertices of $G'$, the vertices of $P_e$ that have even indices, the remaining vertices of $P_e$, the vertices in $V$, and then $r$.
    This elimination ordering is indeed a $2$-degeneracy ordering, that is, each eliminated vertex has at most two neighbors in the graph induced by the non-eliminated vertices. 
    The pendant vertices attached to $P_e$ ensures that all the vertices of $P_e$ are included in every minimal connected vertex cover of $G'$, and hence there is a one-to-one correspondence between the set of minimal connected vertex covers of $G'$ and the set of minimal transversals of $\mathcal H$ as well.
\end{proof}

We next discuss an extension of the previous algorithm to general graphs.
Although the algorithm described in the previous subsection also works for general graphs, its delay can be exponential in $n$.
In order to cope with this exponential running time bound, we refine the definition of $\mathcal N^+$ and prove that $\mathcal N^+(X)$ can be generated with a better running time bound for $X \in \convc$.

We first observe that the algorithm also works even when $\mathcal N^+$ is modified into
\begin{align*}
    \mathcal N^+(X) = \{\mu((X \setminus \{v\}) \cup N(v) \cup W) : v \in X, W \text{ is a minimal valid augmentation for } (X \setminus \{v\}) \cup N(v)\},
\end{align*}
where a \emph{minimal valid augmentation} for a vertex cover $X'$ is an inclusion-wise minimal $W$ such that $G[X' \cup W]$ is connected.
In other words, the directed graph $\mathcal D$ defined on this modified neighborhood is strongly connected as well.
In the following, we show that for $X \in \convc$, $\mathcal N^+(X)$ can be generated in time $(n + |\mathcal N^+(X)|)^{O(\log n)}$, which yields an \emph{output quasi-polynomial time bound} $(n + |\convc|)^{O(\log n)}$ for \textsc{Minimal Connected Vertex Cover Enumeration} due to \Cref{thm:supergraph}.

In order to enumerate all the sets in $\mathcal N^+(X)$, it suffices to enumerate all minimal valid augmentations for $(X \setminus \{v\}) \cup N(v)$ for each $v \in X$.
Let $v \in X$ and let $X' = (X \setminus \{v\}) \cup N(v)$.
Recall that $X'$ is a vertex cover of $G$.
To simplify the following discussion, we construct a bipartite graph $H$ with color classes $L$ and $R$ as follows.
For each connected component $C$ in $G[X']$, $L$ contains a vertex $v_C$, and define $R \coloneqq V(G) \setminus X'$.
Note that $V(G) \setminus X'$ is an independent set of $G$.
For $v_C \in L$ and $w \in R$, they are adjacent in $H$ if and only if $w$ has a neighbor in $C$.
In other words, the bipartite graph $H$ is obtained from $G$ by contracting each connected component $C$ in $G[X']$ into a single vertex $v_C$.
Then, every (minimal) valid augmentation for $X'$ is a vertex set $W \subseteq R$ such that $H[L \cup W]$ is connected, and vice versa.
Moreover, such a set $W$ is also a (minimal) valid augmentation for $L$ (in $H$).
Hence, our subsequent discussion will be done on the bipartite graph $H$.
In the remaining of this section, we assume that $H$ is connected.

\begin{observation}\label{obs:minimal-augmentation-cutver}
    Let $W \subseteq R$ be a valid augmentation for $L$ in $H$.
    Then, $W$ is minimal if and only if each vertex of $W$ is a cut vertex in $H[L \cup W]$.
\end{observation}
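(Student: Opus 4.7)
The plan is to prove the two directions of the biconditional separately, both of which are essentially direct consequences of the definitions together with the bipartite structure of $H$.

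For the forward direction, suppose $W$ is a minimal valid augmentation. By definition of minimality, for every $w \in W$, the set $W \setminus \{w\}$ is not a valid augmentation, i.e., $H[L \cup (W \setminus \{w\})]$ is disconnected. Since $H[L \cup (W \setminus \{w\})] = H[L \cup W] - w$ and $H[L \cup W]$ is connected (as $W$ itself is valid), the vertex $w$ is by definition a cut vertex of $H[L \cup W]$.

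For the backward direction, suppose each $w \in W$ is a cut vertex in $H[L \cup W]$, and assume for contradiction that some proper subset $W' \subsetneq W$ is a valid augmentation. Choose any $w \in W \setminus W'$. I will show that $H[L \cup (W \setminus \{w\})]$ is connected, contradicting $w$ being a cut vertex. Consider any $u \in W \setminus (W' \cup \{w\})$. Since $R$ is independent in $H$ and $u \in R$, all neighbors of $u$ in $H[L \cup W]$ lie in $L$; because $H[L \cup W]$ is connected and has more than one vertex, $u$ must have at least one neighbor in $L$. Now $H[L \cup W']$ is connected by assumption, so all of $L$ lies in a single connected component of $H[L \cup W']$. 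Attaching each such $u$ (each adjacent to $L$) to this graph preserves connectivity, hence $H[L \cup (W \setminus \{w\})]$, which is obtained from $H[L \cup W']$ by adding the vertices of $W \setminus (W' \cup \{w\})$, is connected — a contradiction.

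The main subtlety is the backward direction: the definition of minimality is about proper subsets, not just subsets obtained by removing a single element, so being a cut vertex (a single-element condition) is a priori weaker. The argument overcoming this uses the bipartite structure in an essential way: since $R$ is independent, any additional vertex of $W$ one puts back can only hang off $L$, which is already glued into one component of $H[L \cup W']$. Without this bipartite structure, the equivalence could fail, so the proof should explicitly invoke $W \subseteq R$ and the independence of $R$.
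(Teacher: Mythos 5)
Your proof is correct; the paper states this as an Observation without any proof, and your argument supplies exactly the justification it implicitly relies on. In particular, you rightly identify that the backward direction is the nontrivial one (minimality quantifies over all proper subsets, not just single-vertex deletions) and that it is the independence of $R$ in the bipartite graph $H$ that lets you re-attach the vertices of $W \setminus (W' \cup \{w\})$ to the connected graph $H[L \cup W']$ one at a time.
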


We introduce some notations to explain our algorithm.
The collection of minimal valid augmentations in $H$ for $L$ is denoted by $\mathcal S(H, L)$.
For a collection of valid augmentations $\mathcal A$ in $H$, 
the set of minimal valid augmentations in $\mathcal A$ is denoted by ${\tt min}_{H,L}(\mathcal A)$ (i.e., ${\tt min}_{H,L}(\mathcal A) = \mathcal A \cap \mathcal S(H, L)$).
For a collection of sets $\mathcal U = \{U_1, \dots, U_t\}$ and an element $e$, $\mathcal U + e$ denotes
$\{U_1 \cup \{e\}, \ldots, U_t \cup \{e\}\}$.
We extend this notation to a set of elements: For a set $X$, $\mathcal U + X$ denotes $\{U_1 \cup X, \ldots, U_t \cup X\}$.
For $X \subseteq R$, we define a bipartite graph $H_X$ obtained from $H$ by contracting each connected component $C$ in $H[L \cup X]$ into a single vertex $v_C$.
We also define
\begin{align*}
    L_X = \{v_C : C \text{ is a component in } H[L \cup X] \text{ containing a vertex in } L\}.
\end{align*}
Clearly, we have $H_\emptyset = H$ and $L_\emptyset = L$.
When $X = \{v\}$, we respectively denote $H_{\{v\}}$ and $L_{\{v\}}$ by $H_v$ and $L_v$ as shorthand notations.

\begin{lemma}\label{lem:cvc:contraction-minimality}
    Let $W \in \mathcal S(H, L)$.
    For $W' \subseteq W$, $W \setminus W'$ is a minimal valid augmentation for $L_{W'}$ in $H_{W'}$.
\end{lemma}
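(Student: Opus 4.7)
The plan is to reduce the lemma to the cut-vertex characterization given by \Cref{obs:minimal-augmentation-cutver}: it suffices to verify (i) that $W \setminus W'$ is a valid augmentation for $L_{W'}$ in $H_{W'}$, and (ii) that every $v \in W \setminus W'$ is a cut vertex in $H_{W'}[L_{W'} \cup (W \setminus W')]$. The unifying observation driving both parts is that $H_{W'}[L_{W'} \cup (W \setminus W')]$ is exactly the quotient graph obtained from $H[L \cup W]$ by contracting each connected component of $H[L \cup W']$ to a single vertex, while leaving the vertices of $W \setminus W'$ untouched.

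For (i), I would first justify the quotient description. Because $H$ is bipartite and each vertex of $W' \subseteq R$ has at least one neighbor in $L$ (every vertex of $W$ must have an $L$-neighbor, since $H[L \cup W]$ is connected and bipartite), every component of $H[L \cup W']$ contains an $L$-vertex, so $L_{W'}$ captures all contracted components and no vertex is lost when passing from $H_{W'}$ to the induced subgraph on $L_{W'} \cup (W \setminus W')$. Since contracting connected subgraphs of a connected graph preserves connectivity, the quotient of the connected graph $H[L \cup W]$ is connected, which proves (i).

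For (ii), fix $v \in W \setminus W'$. By \Cref{obs:minimal-augmentation-cutver} applied to $W$ in $H$, $v$ is a cut vertex of $H[L \cup W]$, so $H[L \cup W] - v$ has at least two components $A$ and $B$. The crucial point is that since $v \notin L \cup W'$, each component $C$ of $H[L \cup W']$ is a connected subgraph of $H[L \cup W] - v$, hence lies entirely within a single component of $H[L \cup W] - v$. Therefore the contractions never identify a vertex of $A$ with a vertex of $B$, and since there was no edge between $A$ and $B$ in $H[L \cup W]$, none exists between their images in the quotient. So the quotient minus $v$ is still disconnected, making $v$ a cut vertex there. Applying \Cref{obs:minimal-augmentation-cutver} inside $H_{W'}$ then yields the minimality of $W \setminus W'$.

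The main obstacle is the bookkeeping around the identification of $H_{W'}[L_{W'} \cup (W \setminus W')]$ with the quotient of $H[L \cup W]$. One must confirm via bipartiteness that no contracted component is a $W'$-only singleton (which would live outside $L_{W'}$), so that the quotient description is clean; and one must carefully check that the location of the contractions strictly inside $H[L \cup W] - v$ is what prevents the cut-vertex property from being destroyed under the quotient. Once these two points are nailed down, both halves of the argument are short appeals to \Cref{obs:minimal-augmentation-cutver}.
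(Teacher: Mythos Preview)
Your proposal is correct and follows essentially the same approach as the paper: both identify $H_{W'}[L_{W'} \cup (W \setminus W')]$ as the quotient of $H[L \cup W]$ by the components of $H[L \cup W']$, use this for validity, and then invoke \Cref{obs:minimal-augmentation-cutver} for minimality. Your treatment of step~(ii) is in fact more explicit than the paper's---where the paper only remarks that ``$W$ is an independent set in $H$, so all the neighbors of $W\setminus W'$ are preserved in $H_{W'}$,'' you spell out why the contracted pieces, being subgraphs of $H[L\cup W]-v$, cannot merge two components of $H[L\cup W]-v$, which is exactly what is needed.
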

\begin{proof}
    Since $H_{W'}[L_{W'} \cup (W \setminus W')]$ is obtained from $H[L \cup W]$ by contracting each component $C$ of $H[L \cup W]$ containing a vertex in $W'$ into a single vertex $v_C$, $H_{W'}[L_{W'} \cup (W \setminus W')]$ is connected, and hence $W \setminus W'$ is a valid augmentation for $L_{W'}$ in $H_{W'}$.
    Since $W \in \mathcal S(H, L)$, by~\Cref{obs:minimal-augmentation-cutver}, each vertex in $W$ is a cut vertex in $H[L \cup W]$.
    Moreover, as $W$ is an independent set in $H$, all the neighbors of $W \setminus W'$ are preserved in $H_{W'}$.
    This implies that every vertex in $W \setminus W'$ is still a cut vertex in $H_{W'}[L_{W'} \cup (W \setminus W')]$.
    Hence, by~\Cref{obs:minimal-augmentation-cutver}, $W \setminus W'$ is a minimal valid augmentation for $L_{W'}$ in $H_{W'}$.
\end{proof}

\begin{lemma}\label{lem:cvc:contraction-extension}
    Let $X \subseteq R$.
    For $W \in \mathcal S(H_X, L_X)$, $W \cup X$ is a valid augmentation for $L$ in $H$.
\end{lemma}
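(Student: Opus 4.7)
The plan is to prove the connectivity of $H[L \cup X \cup W]$ by lifting the connectivity of $H_X[L_X \cup W]$ through the contraction. Each vertex $v_C \in L_X$ represents a connected subgraph $C$ of $H[L \cup X]$, so a path in the contracted graph expands to a walk in the original by traversing inside each component.

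First, I would record a preliminary structural observation. Since $H$ is assumed connected (and nontrivial), every $x \in X \subseteq R$ has at least one neighbor in $L$, because $R$ is independent in the bipartite graph $H$. Hence no component of $H[L \cup X]$ is a singleton taken from $X$, and every component contains at least one vertex of $L$. In particular $V(H_X) = L_X \cup (R \setminus X)$, so $W \subseteq R \setminus X$ is disjoint from $X$, and $L \cup X \cup W$ is well-defined as a disjoint union $L \cup X \sqcup W$.

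Next I would pick arbitrary vertices $u, u' \in L \cup X \cup W$ and construct a walk between them in $H[L \cup X \cup W]$. For $u \in L \cup X$, let $C_u$ denote its component in $H[L \cup X]$ and $v_{C_u}$ the corresponding vertex of $L_X$; for $u \in W$ set $C_u = \{u\}$ and $v_{C_u} = u$. Since $W \in \mathcal S(H_X, L_X)$, the subgraph $H_X[L_X \cup W]$ is connected, so there is a path $v_{C_u} = z_0, z_1, \ldots, z_k = v_{C_{u'}}$ in it. By the definition of $H_X$, each edge $\{z_i, z_{i+1}\}$ lifts to an actual edge of $H$ between some vertex of the component represented by $z_i$ and some vertex of the one represented by $z_{i+1}$, all lying inside $L \cup X \cup W$. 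Chaining these lifted edges and using the connectivity of each $C_{z_i}$ to traverse between consecutive attachment points produces a walk from $u$ to $u'$ in $H[L \cup X \cup W]$, which gives the required connectivity.

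The main obstacle, though mild, is the endpoint bookkeeping when expanding a contracted vertex $v_C$ into an internal walk through $C$: one needs to verify that $C$ contains both an endpoint of the incoming lifted edge and an endpoint of the outgoing lifted edge, and then invoke the connectivity of $C$ (viewed as a subgraph of $H[L \cup X] \subseteq H[L \cup X \cup W]$) to link them. Both facts are immediate from the definition of the contraction that produces $H_X$, and notably the minimality of $W$ is not used here—only that $H_X[L_X \cup W]$ is connected—so no further argument is required.
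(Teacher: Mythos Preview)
Your argument is correct and follows essentially the same route as the paper's proof: both lift a path in $H_X[L_X\cup W]$ to a walk in $H[L\cup X\cup W]$ by expanding each contracted vertex $v_C$ into a traversal of the connected component $C$. The paper reduces first to pairs $u,v\in L$ and handles $W\cup X$ separately via the observation that every vertex of $W\cup X$ has a neighbor in $L$, whereas you treat arbitrary $u,u'\in L\cup X\cup W$ directly; this is a cosmetic difference, and your additional remark that every component of $H[L\cup X]$ meets $L$ (so $V(H_X)=L_X\cup(R\setminus X)$ and $W\cap X=\emptyset$) makes the bookkeeping slightly cleaner.
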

\begin{proof}
    We claim that there is a path between $u$ and $v$ in $H[L \cup W \cup X]$ for $u, v \in L$, which implies that $H[L \cup W \cup X]$ is connected, as each vertex in $W \cup X$ has a neighbor in $H$.
    If $u$ and $v$ are contained in a single component in $H[L \cup X]$, we are done.
    Thus, we assume that $u$ and $v$ belong to distinct components $C$ and $C'$, respectively.
    Since $H_X[L_X \cup W]$ is connected, there is a path $P_X$ between $v_C$ and $v_{C'}$ in $H_X[L_X \cup W]$.
    From this path $P_X$, we can construct a path between $u$ and $v$ in $H[L \cup W \cup X]$ by appropriately replacing a vertex $v_{C''}$ in $P_X$ with a path in component $C''$.
    Hence, $W \cup X$ is a valid augmentation for $L$ in $H$.
\end{proof}

Our algorithm shares the same underlying idea as \cite{FredmanK:JAL:Complexity:1996,Tamaki00}, which is described in \Cref{alg:oq}.
The algorithm recursively generates minimal valid augmentations for $L$ in $H$ unless $\min(|L|, |R|) \le 1$.
Let $v$ be an arbitrary vertex in $R$.
We enumerate all sets in $\mathcal S(H, L)$ by separately generating those including $v$ and those excluding~$v$.
The following lemma ensures that this strategy correctly enumerates all minimal valid augmentations.

\begin{lemma}\label{lem:large}
    $\mathcal S(H, L) = \mathcal S(H - v, L) \cup {\tt min}_{H, L}(\mathcal S(H_v, L_v) + v)$.
\end{lemma}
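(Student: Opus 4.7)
The plan is to prove the two set inclusions separately, invoking \Cref{lem:cvc:contraction-minimality,lem:cvc:contraction-extension} to relate valid augmentations in $H$, $H - v$, and $H_v$. The workhorse observation is that, since $v \in R$ and $L, R$ form a bipartition, we have $v \notin L$; thus for any $W' \subseteq R \setminus \{v\}$ the induced subgraphs $H[L \cup W']$ and $(H - v)[L \cup W']$ coincide, so a set $W \subseteq R \setminus \{v\}$ is a (minimal) valid augmentation for $L$ in $H$ if and only if it is a (minimal) valid augmentation for $L$ in $H - v$.

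For the forward inclusion, I would take $W \in \mathcal S(H, L)$ and split on whether $v \in W$. If $v \notin W$, the observation above immediately gives $W \in \mathcal S(H - v, L)$. If $v \in W$, then applying \Cref{lem:cvc:contraction-minimality} with $W' = \{v\}$ shows that $W \setminus \{v\} \in \mathcal S(H_v, L_v)$, so $W \in \mathcal S(H_v, L_v) + v$; minimality of $W$ in $\mathcal S(H, L)$ then automatically places it inside the ${\tt min}_{H, L}$-filter.

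For the reverse inclusion, I would handle the two terms of the union in turn. For $W \in \mathcal S(H - v, L)$, the observation lifts $W$ to a valid augmentation for $L$ in $H$; any proper subset $W' \subsetneq W$ with $H[L \cup W']$ connected would necessarily avoid $v$ and thus descend to a valid augmentation in $H - v$, contradicting minimality there. For $W \in {\tt min}_{H, L}(\mathcal S(H_v, L_v) + v)$, \Cref{lem:cvc:contraction-extension} applied with $X = \{v\}$ yields that $W$ is a valid augmentation for $L$ in $H$, and the ${\tt min}_{H, L}$-filter ensures minimality of $W$ in $\mathcal S(H, L)$ by definition. I do not expect a genuine obstacle: the only mild subtlety is juggling minimality between the two ambient graphs $H$ and $H - v$, which is handled uniformly by the bipartite observation, while transferring minimality through the contraction in the $v \in W$ case is exactly what \Cref{lem:cvc:contraction-minimality} guarantees.
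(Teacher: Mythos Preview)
Your proposal is correct and follows essentially the same argument as the paper: both prove the two inclusions separately, using \Cref{lem:cvc:contraction-minimality} for the case $v \in W$ in the forward direction and \Cref{lem:cvc:contraction-extension} together with the definition of ${\tt min}_{H,L}$ for the reverse direction. You are slightly more explicit than the paper about why minimality transfers between $H$ and $H - v$ via the observation $H[L \cup W'] = (H-v)[L \cup W']$, but this is the same reasoning the paper leaves implicit.
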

\begin{proof}
    By~\Cref{lem:cvc:contraction-extension}, $\mathcal S(H_v, L_v) + v$ is a valid augmentation for $L$ in $H$.
    Since each valid augmentation in $\mathcal S(H - v, L)$ is also a minimal valid augmentation for $L$ in $H$, we have
    \begin{align*}
        \mathcal S(H - v, L) \cup {\tt min}_{H, L}(\mathcal S(H_v, L_v) + v) \subseteq \mathcal S(H, L).
    \end{align*}

    Let $W \in \mathcal S(H, L)$ be a minimal valid augmentation for $L$ in $H$.
    If $v$ is not contained in $W$, $W$ is a minimal valid augmentation for $L$ in $H - v$ as well, implying that $W \in \mathcal S(H - v, L)$.
    Suppose otherwise.
    By~\Cref{lem:cvc:contraction-minimality}, it holds that $W \setminus \{v\} \in \mathcal S(H_v, L_v)$.
    Hence, we have $W \in {\tt min}_{H, L}(\mathcal S(H_v, L_v) + v)$.
\end{proof}

By~\Cref{lem:large}, we can enumerate all minimal valid augmentations in $\mathcal S(H, L)$ by enumerating those in $\mathcal S(H - v, L)$ and in $\mathcal S(H_v, L_v)$ separately and just combining them.
The following lemma gives another way to enumerate them.

\begin{lemma}\label{lem:small}
    $\mathcal S(H, L) = {\tt min}_{H,L}\left(\bigcup_{W \in \mathcal S(H_v, L_v)} \mathcal S(H_{W} - v, L_{W}) + W \right) \cup {\tt min}_{H, L}\left(\mathcal S(H_v, L_v) + v \right)$.   
\end{lemma}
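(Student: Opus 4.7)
The plan is to establish the two set inclusions separately, reusing \Cref{lem:cvc:contraction-extension} (for ``$\supseteq$'') and \Cref{lem:cvc:contraction-minimality} (for ``$\subseteq$''), closely mirroring the proof of \Cref{lem:large}. The only genuinely new ingredient is a decomposition step showing that every minimal valid augmentation avoiding $v$ can be split into an ``inner'' minimal augmentation of $H_v$ and a minimal remainder in the resulting contracted graph.

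For the ``$\supseteq$'' inclusion, I will argue that both pieces of the right-hand side lie in $\mathcal S(H, L)$. The second term, ${\tt min}_{H, L}(\mathcal S(H_v, L_v) + v)$, is already handled by the analogous argument in the proof of \Cref{lem:large}. For the first term, fix $W \in \mathcal S(H_v, L_v)$ and $W' \in \mathcal S(H_W - v, L_W)$; since $v \notin W'$, the induced subgraphs of $H_W$ and $H_W - v$ on $L_W \cup W'$ coincide, so $W'$ is a valid augmentation for $L_W$ in $H_W$ as well, and \Cref{lem:cvc:contraction-extension} with $X = W$ yields that $W' \cup W$ is a valid augmentation for $L$ in $H$. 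The outer ${\tt min}_{H, L}$ then restricts the union to a subset of $\mathcal S(H, L)$ by definition.

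For the ``$\subseteq$'' inclusion, let $W^* \in \mathcal S(H, L)$. If $v \in W^*$, then \Cref{lem:cvc:contraction-minimality} applied with the subset $\{v\} \subseteq W^*$ places $W^* \setminus \{v\}$ in $\mathcal S(H_v, L_v)$, so $W^* \in \mathcal S(H_v, L_v) + v$, and the minimality of $W^*$ puts it in the second term. Suppose instead that $v \notin W^*$. I will first verify that $W^*$ is a (not necessarily minimal) valid augmentation for $L_v$ in $H_v$: because $H$ is connected with $\min(|L|,|R|) \ge 2$, the vertex $v$ has at least one $L$-neighbor, so $H[L \cup \{v\} \cup W^*]$ is connected, and contracting the $v$-component preserves connectedness in $H_v[L_v \cup W^*]$. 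Next, I extract some $W \subseteq W^*$ with $W \in \mathcal S(H_v, L_v)$ by greedy deletion, and apply \Cref{lem:cvc:contraction-minimality} to $W^*$ with this $W$ playing the role of $W'$, obtaining $W^* \setminus W \in \mathcal S(H_W, L_W)$. Since $v$ lies in neither $L_W$ nor $W^* \setminus W$, the induced subgraphs of $H_W$ and $H_W - v$ on $L_W \cup (W^* \setminus W)$ coincide, so $W^* \setminus W \in \mathcal S(H_W - v, L_W)$; hence $W^* = (W^* \setminus W) \cup W$ lies in the first term.

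The main obstacle, and the place where the argument genuinely extends that of \Cref{lem:large}, is the decomposition $W^* = (W^* \setminus W) \cup W$ in the case $v \notin W^*$. The key point is that \Cref{lem:cvc:contraction-minimality} already guarantees minimality of the outer remainder $W^* \setminus W$ in the contracted graph once \emph{any} $W \in \mathcal S(H_v, L_v)$ with $W \subseteq W^*$ is extracted from $W^*$; passing from $H_W$ to $H_W - v$ is then automatic because $v$ does not appear in the remainder, so minimality transfers from the ambient graph to its $v$-deletion.
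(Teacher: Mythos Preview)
Your proof is correct and follows essentially the same route as the paper's: the paper merely invokes \Cref{lem:large} first to factor out the common term ${\tt min}_{H, L}(\mathcal S(H_v, L_v) + v)$, reducing the claim to $\mathcal S(H - v, L) = {\tt min}_{H,L}\bigl(\bigcup_{W \in \mathcal S(H_v,L_v)} \mathcal S(H_{W} - v, L_{W}) + W \bigr)$, and then proves both inclusions by exactly your decomposition argument (extract some $W \subseteq W^*$ with $W \in \mathcal S(H_v, L_v)$ and apply \Cref{lem:cvc:contraction-minimality} to the remainder). The side hypothesis $\min(|L|,|R|) \ge 2$ you invoke is unnecessary---connectedness of the bipartite graph $H$ already forces $v$ to have an $L$-neighbor---but this is cosmetic.
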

\begin{proof}
    By~\Cref{lem:large}, it suffices to show that $\mathcal S(H - v, L) = {\tt min}_{H,L}(\bigcup_{W \in \mathcal S(H_v, L_v)} \mathcal S(H_{W} - v, L_{W}) + W)$.

    Observe that each set in $\mathcal S(H_W - v, L_W) + W$ for $W \in \mathcal S(H_v, L_v)$ does not contain $v$.
    Moreover, such a set is a valid augmentation for $L$ in $H$ due to \Cref{lem:cvc:contraction-extension}.
    Thus, each set in ${\tt min}_{H, L}(\bigcup_{W \in \mathcal S(H_v, L_v)} \mathcal S(H_{W} - v, L_W) + W)$ is a minimal valid augmentation for $L$ in $H$ not containing $v$, meaning that
    \begin{align*}
        {\tt min}_{H, L}\left(\bigcup_{W \in \mathcal S(H_v, L_v)} \mathcal S(H_{W} - v, L_W) + W\right) \subseteq \mathcal S(H - v, L).
    \end{align*}

    Let $W' \in \mathcal S(H - v, L)$ be a minimal valid augmentation for $L$ in $H - v$.
    Since $H[L \cup W']$ is connected, $W'$ is also a valid augmentation for $L_v$ in $H_v$.
    This implies that there is $W \in \mathcal S(H_v, L_v)$ such that $W \subseteq W'$.
    By~\Cref{lem:cvc:contraction-minimality}, $W' \setminus W$ is a minimal valid augmentation for $L_W$ in $H_W$ and hence in $H_W - v$ as $v \notin W$.
    Hence, $W' \in \mathcal S(H_W - v, L_W) + W$.
    As $W' \in \mathcal S(H - v, L)$, every vertex in $W'$ is a cut vertex in $H[L \cup W']$.
    This ensures that $W' \in {\tt min}_{H,L}(\mathcal S(H_W - v, L_W) + W)$.
\end{proof}

The above lemmas enable us to enumerate minimal valid augmentations in $\mathcal S(H, L)$ with two methods.
Suppose that $L$ contains at least two vertices.
In our algorithm, we select a vertex $v \in R \coloneqq V(H) \setminus L$ of highest degree among $R$.
If $v$ has at most one neighbor, we conclude that there is no valid augmentation for $L$ in $H$ at all, as $|L| \ge 2$.
If $v$ has more than $|L| / 2$ neighbors in $L$,
we adopt the first method; we adopt the second method otherwise.
\Cref{alg:oq} recursively employs this branching strategy until $\min(|L|, |V(H) \setminus L|) \le 1$.

\begin{algorithm}[t]
    \SetAlgoLined
    \SetKwProg{Procedure}{Procedure}{}{}
    \SetKwFunction{EnumMinISSE}{MinValidAug}
    \DontPrintSemicolon
    \Procedure{\EnumMinISSE{$H, L$}}{
        Let $R = V(H) \setminus L$.\;
        \If{$\min(|L|, |R|) \le 1$}{
            \Return all minimal valid augmentations computed by a brute-force algorithm.
        }
        Let $v$ be a highest degree vertex in $R$.\;
        \If{$|N(v)| \le 1$}{\Return $\emptyset$}
        $\mathcal S \gets \emptyset$\;
        $\mathcal S_v \gets $\EnumMinISSE{$H_v, L_v$}\;
        \If{$|N(v)| > |L|/2$}{
            \Return \EnumMinISSE{$H - v, L$} $\cup~{\tt min}_{H, L}(\mathcal S_v + v)$\;
        }\Else{
            \ForEach{$W \in \mathcal S_v$}{
                $\mathcal S \gets \mathcal S \cup {\tt min}_{H, L}($\EnumMinISSE{$H_{W} - v, L_W$}$ + W)$\;
            }
            \Return $\mathcal S \cup {\tt min}_{H, L}(\mathcal S_v + v) $\;
        }
    }
    \caption{An output quasi-polynomial time algorithm for enumerating minimal valid augmentations. }
    \label{alg:oq}
\end{algorithm}

The correctness of \Cref{alg:oq} directly follows from \Cref{lem:large,lem:small}.
We then analyze the running time of \Cref{alg:oq}.
Let $T(\ell, r, N)$ be the worst case running time of \Cref{alg:oq} when a bipartite graph $H$ with a color class $L$ that satisfies $|L| \le \ell$, $|V(H)\setminus L| \le r$, and $|\mathcal S(H, L)| \le N$ is given as input.
Clearly, $T(\ell, r, N) \le T(\ell', r', N')$ for $\ell \le \ell'$, $r \le r'$, and $N \le N'$.

\begin{lemma}\label{lem:cvc:general:runnning-time}
    For $\ell, r, N \in \mathbb N$ with $n = \ell + r$, it holds that $T(\ell, r, N) = (n + N)^{O(\log n)}$.
\end{lemma}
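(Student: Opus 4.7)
The plan is to proceed by strong induction on $\ell + r + N$, showing $T(\ell, r, N) \le (n + N)^{c \log_2 n}$ for a sufficiently large constant $c$, where $n = \ell + r$. The base case $\min(\ell, r) \le 1$ runs the brute-force branch of \Cref{alg:oq} in polynomial time, which is absorbed into the bound.

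The first technical step is to bound the output counts of the three kinds of subproblems that arise in the recursion. The inequality $|\mathcal{S}(H - v, L)| \le N$ is immediate, since every minimal valid augmentation for $L$ in $H - v$ remains minimal for $L$ in $H$. To bound $|\mathcal{S}(H_v, L_v)| \le N$, I construct an injection into $\mathcal{S}(H, L)$: for each $W \in \mathcal{S}(H_v, L_v)$, \Cref{lem:cvc:contraction-extension} gives that $W \cup \{v\}$ is a valid augmentation for $L$ in $H$, and using \Cref{lem:cvc:contraction-minimality} together with $v \notin W$ one can verify that the unique minimal subset of $W \cup \{v\}$ lying in $\mathcal{S}(H, L)$ is either $W$ itself or $W \cup \{v\}$; the induced map $W \mapsto \text{this minimal subset}$ is injective because no set in $\mathcal{S}(H_v, L_v)$ contains $v$. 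The same reasoning, applied to the further contracted subproblem for each fixed $W \in \mathcal S_v$, yields $|\mathcal{S}(H_W - v, L_W)| \le N$.

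In case 1 ($|N(v)| > \ell/2$), the recursion $T(\ell, r, N) \le T(\ell/2 + 1, r, N) + T(\ell, r - 1, N) + \mathrm{poly}(n, N)$ has one branch that halves $\ell$ and another that decrements $r$; the halving chain has depth $O(\log n)$, the decrement chain has depth at most $n$, and the polynomial overhead is absorbed into the exponent $c \log_2 n$ when $c$ is chosen large. In case 2 ($|N(v)| \le \ell/2$), the recursion reads $T(\ell, r, N) \le T(|L_v|, r, N) + \sum_{W \in \mathcal S_v} T(|L_W|, r - 1, N) + \mathrm{poly}(n, N)$ with $|\mathcal S_v| \le N$ and $|L_W| < |L_v| \le \ell$. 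A Fredman--Khachiyan-style potential argument, with potential $\Phi(\ell, r, N) = (n + N)^{c \log_2 n}$, is used to verify that $\Phi$ dominates the sum of the subpotentials.

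The main obstacle is case 2: the branching factor $|\mathcal S_v|$ can be as large as $N$, while the per-child reduction on the left side (from $\ell$ to $|L_W|$) can be modest, so naive summation would blow up the exponent. This is precisely the situation addressed by the Fredman--Khachiyan paradigm; the verification reduces to proving the potential inequality $\Phi(\ell, r, N) \ge \Phi(|L_v|, r, N) + \sum_{W \in \mathcal S_v} \Phi(|L_W|, r - 1, N) + \mathrm{poly}(n, N)$ in both cases, using that every $W \in \mathcal S_v$ has $|W| \ge 1$ and $r$ strictly decreases in every inner call, from which $T(\ell, r, N) \le \Phi(\ell, r, N) = (n + N)^{O(\log n)}$ follows.
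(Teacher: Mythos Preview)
Your injection argument bounding $|\mathcal S(H_v,L_v)|\le N$ and your treatment of Case~1 are essentially the paper's. The genuine gap is in Case~2. You only record $|L_W|<|L_v|\le \ell$ for $W\in\mathcal S_v$, and then claim a Fredman--Khachiyan-style potential inequality
\[
\Phi(\ell,r,N)\ \ge\ \Phi(|L_v|,r,N)+\sum_{W\in\mathcal S_v}\Phi(|L_W|,r-1,N)+\mathrm{poly}(n,N)
\]
follows ``using that every $W$ has $|W|\ge 1$ and $r$ strictly decreases''. With those hypotheses alone this inequality is false: the sum has up to $N$ terms, each of size roughly $(n+N)^{c\log(n-2)}$, while the left side is $(n+N)^{c\log n}$, and the ratio $(n+N)^{c(\log n-\log(n-2))}=(n+N)^{O(c/n)}$ does not dominate $N$. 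A unit decrement of $r$ per branch cannot pay for an $N$-way fan-out; the Fredman--Khachiyan mechanism needs a \emph{halving} on the branching side as well.

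The missing observation is that in Case~2 one actually has $|L_W|\le |N_H(v)|\le \ell/2$ for every $W\in\mathcal S(H_v,L_v)$. Indeed, $W\in\mathcal S(H_v,L_v)$ means $H[L\cup W\cup\{v\}]$ is connected; deleting $v$ can split this into at most $|N_H(v)|$ components, each containing a vertex of $N_H(v)$, so $H[L\cup W]$ has at most $|N_H(v)|$ components and hence $|L_W|\le |N_H(v)|$. With this bound the recursive call at line~14 is $T(\ell/2,r-1,N)$, and both cases collapse to
\[
T(\ell,r,N)\ \le\ T(\ell,r-1,N)+N\cdot T(\ell/2,r-1,N)+\mathrm{poly}(n,N),
\]
which telescopes in $r$ and then in $\ell$ to give $(n+N)^{O(\log n)}$. (Minor: your line-9 call should also have second argument $r-1$, not $r$, since $v$ leaves $R$ under the contraction.)
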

\begin{proof}
    We prove the following claim by induction on $n$: $T(\ell, r, N) \le (\ell\cdot r + N)^{c\log (\ell\cdot r)}$ for some constant $c$.
    Let $H = (L \cup R, E)$ be a bipartite graph with $|L| \le \ell$, $|R| \le r$, and $|\mathcal S(H, L)| \le N$.
    If $\min(|L|, |V(H) \setminus L|) \le 1$, then $T(\ell, r, N)$ is upper bounded by a polynomial in $n$, as each set in $\mathcal S(H, L)$ has at most two vertices.
    Thus, we consider the other case.
    Observe that $|\mathcal S(H - v, L)| \le |\mathcal S(H, L)|$ for any $v \in R$.
    Moreover, we claim that $|\mathcal S(H_v, L_v)| \le |\mathcal S(H, L)|$ for any $v \in R$.

    Let $W \in \mathcal S(H_v, L_v)$.
    By~\Cref{lem:cvc:contraction-extension}, $W \cup \{v\}$ is a valid augmentation for $L$ in $H$.
    Moreover, all vertices in $W$ are cut vertices in $H[L \cup W]$ as they are cut vertices in $H_v[L_v \cup W]$.
    Thus, either $W$ or $W \cup \{v\}$ is a minimal valid augmentation for $L$ in $H$.
    This indicates that there is an injection $\phi$ from $\mathcal S(H_v, L_v)$ to $\mathcal S(H, L)$ by setting $\phi(W) = W$ or $\phi(W) = W \cup \{v\}$ for all $W \in \mathcal S(H_v, L_v)$, proving the claim $|\mathcal S(H_v, L_v)| \le |\mathcal S(H, L)|$.
    This also proves that $|\mathcal S(H_X, L_X)| \le |\mathcal S(H, L)|$ for $X \subseteq R$.

    Now, we turn to showing that $T(\ell, r, N) \le (\ell \cdot r + N)^{c\log (\ell \cdot r)}$ for some constant $c$.
    For simplicity, we assume that both $\ell$ and $r$ are even. 
    Observe that given a collection of valid augmentations $\mathcal A$, one can compute ${\tt min}_{H, L}(\mathcal A)$ in time $O(n^2|\mathcal A|)$ using a standard data structure.

    Suppose that there is a vertex $v \in R$ with $|N_H(v)| > |L|/2$.
    In this case, $L_v$ has at most $|L|/2$ vertices.
    Moreover, both $H_v$ and $H - v$ have at most $r - 1$ vertices on the ``right-hand side''.
    Since $|\mathcal S(H, L)| \ge |\mathcal S(H_v, L_v)|$, we have
    \begin{align*}
        T(\ell, r, N) \le \underbrace{T(\ell/2, r - 1, N)}_{\text{rec. call at line~9}} + \underbrace{T(\ell, r-1, N)}_{\text{rec. call at line~11}} + dn^2N
    \end{align*}
    for some constant $d$.

    Suppose otherwise that $|N_H(v)| \le |L|/2$.
    For $W \in \mathcal S(H_v, L_v)$, we have $L \setminus N_H(v) \subseteq N_H(W)$.
    As $H_v[L_v \cup W]$ is connected, each connected component of $H[L \cup W]$ contains at least one vertex in $N_H(v)$.
    This implies that $L_W$ contains at most $|N_H(v)| \le |L|/2$ vertices.
    Hence, we have
    \begin{align*}
        T(\ell, r, N) &\le \underbrace{T(\ell, r - 1, N)}_{\text{rec. call at line~9}} + \underbrace{|\mathcal S(H_v, L_v)| \cdot T(\ell/2, r-1, N)}_{\text{rec. calls at line~14}} + dn^2N \cdot |\mathcal S(H_v, L_v)|\\
        &\le T(\ell, r - 1, N) + N \cdot T(\ell/2, r-1, N) + dn^2N^2
    \end{align*}
    for some constant $d$.
    
    By combining the above two cases, we have
    \begin{align*}
        T(\ell, r, N) &\le T(\ell, r-1, N) + \max\{T(\ell/2, r - 1, N), N\cdot T(\ell/2, r-1, N)\} + \max\{dn^2N, dn^2N^2\}\\
        &= T(\ell, r-1, N) + N\cdot T(\ell/2, r-1, N) + dn^2N^2\\
        &\le T(\ell, r - 2, N) + N \cdot T(\ell/2, r - 2, N) + N \cdot T(\ell/2, r - 1, N) + 2dn^2N^2\\
        &\le T(\ell, r/2, N) + \frac{rN}{2}\cdot T(\ell, r - 1, N) + \frac{drn^2N^2}{2}\\
        &\le (\ell\cdot r/2 + N)^{c\log (\ell\cdot r/2)} + \frac{rN}{2}(\ell \cdot r/2 + N)^{c\log (\ell \cdot r/2)} + \frac{drn^2N^2}{2}\\
        &= \left(\frac{rN}{2}+1\right)(\ell\cdot r/2 + N)^{c\log (\ell\cdot r) - c} + \frac{drn^2N^2}{2}.
    \end{align*}
    By taking a sufficiently large constant $c$, we have $T(\ell, r, N) \le (\ell \cdot r + N)^{c\log(\ell\cdot r)}$.
\end{proof}

Overall, we have the following theorem.
\begin{theorem}\label{thm:cvc:general}
    There is an algorithm for enumerating minimal connected vertex covers of an $n$-vertex graph $G$ in total time $(n +  |\convc|)^{O(\log n)}$.
\end{theorem}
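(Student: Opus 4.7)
The plan is to instantiate the supergraph technique (\Cref{thm:supergraph}) with the refined out-neighborhood $\mathcal N^+$ based on \emph{minimal} valid augmentations, and to compute each $\mathcal N^+(X)$ by invoking \Cref{alg:oq} as a subroutine whose cost is controlled by \Cref{lem:cvc:general:runnning-time}. Concretely, for each $X \in \convc$ and each $v \in X$, I would apply the contraction described just before \Cref{alg:oq} to the vertex cover $(X \setminus \{v\}) \cup N(v)$, obtaining a bipartite graph $H$ with color classes $L$ and $R$, run \texttt{MinValidAug}$(H, L)$, and return $\mu((X \setminus \{v\}) \cup N(v) \cup W)$ for each output $W$; by construction this reproduces exactly the refined neighborhood $\mathcal N^+(X)$. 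The supergraph traversal (for instance, DFS with a dictionary of already-visited sets) then visits each element of $\convc$ exactly once, with polynomial overhead per visit for duplicate detection.

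The main technical step is bounding $N := |\mathcal S(H, L)|$ in terms of $|\convc|$; once $N \le |\convc|$ is known, \Cref{lem:cvc:general:runnning-time} gives per-invocation cost $(n + |\convc|)^{O(\log n)}$. The hard part will be establishing this via an injection $\phi \colon \mathcal S(H, L) \to \convc$ defined by $\phi(W) = \mu((X \setminus \{v\}) \cup N(v) \cup W)$. Assume toward contradiction that $\phi(W_1) = \phi(W_2) = M$ for distinct $W_1, W_2 \in \mathcal S(H, L)$. Since $W_1, W_2 \subseteq R$ and $R$ is disjoint from $(X \setminus \{v\}) \cup N(v)$, we have $M \subseteq (X \setminus \{v\}) \cup N(v) \cup (W_1 \cap W_2)$. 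Because $M$ is itself a connected vertex cover of $G$, \Cref{prop:monotone} forces the superset $(X \setminus \{v\}) \cup N(v) \cup (W_1 \cap W_2)$ to also be a connected vertex cover, so $W_1 \cap W_2$ is a valid augmentation for $(X \setminus \{v\}) \cup N(v)$. Since $W_1 \neq W_2$ implies $W_1 \cap W_2 \subsetneq W_1$, this contradicts the minimality of~$W_1$.

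Putting everything together, for each visited $X \in \convc$ the per-$v$ cost summed over the at most $n$ choices of $v \in X$ is $n \cdot (n + |\convc|)^{O(\log n)}$, so over the full supergraph traversal the total cost is $|\convc| \cdot n \cdot (n + |\convc|)^{O(\log n)} = (n + |\convc|)^{O(\log n)}$, as claimed. Correctness of the enumeration procedure inside \Cref{alg:oq} is already handled by \Cref{lem:large,lem:small}, so beyond the injection argument the remaining work is purely bookkeeping.
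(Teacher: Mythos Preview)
Your proposal is correct and follows the paper's approach exactly: instantiate \Cref{thm:supergraph} with the minimal-valid-augmentation neighborhood, compute each $\mathcal N^+(X)$ via \Cref{alg:oq}, and bound the cost through \Cref{lem:cvc:general:runnning-time}. Your injection $\phi$ in fact supplies a step the paper leaves implicit (it asserts $\mathcal N^+(X)$ is computable in time $(n+|\mathcal N^+(X)|)^{O(\log n)}$ without spelling out why $|\mathcal S(H,L)|\le|\mathcal N^+(X)|$); the only cosmetic fix is that ``$W_1\neq W_2$ implies $W_1\cap W_2\subsetneq W_1$'' should appeal to the incomparability of distinct minimal augmentations.
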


\section{\textsc{Minimal Connected Dominating Set Enumeration}}\label{sec:conds}
To enumerate all the minimal connected dominating sets, we use the same strategy in the previous section.
Let $G$ be a connected graph with maximum degree~$\Delta$ and let $\conds$ be the collection of all minimal connected dominating sets of $G$.
Let $X, Y \in \conds$ be distinct minimal connected dominating sets of $G$.
Then, there is a vertex $v \in X \setminus Y$.

\begin{lemma}\label{lem:conds:dominate-neighbors}
    There is a vertex set $W \subseteq Y$ of size at most $\Delta$ such that $(X \setminus \{v\}) \cup W$ is a dominating set of $G$.
    Moreover, $G[(X \setminus \{v\}) \cup W]$ has at most $\Delta$ components.
\end{lemma}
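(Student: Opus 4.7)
The plan is to mirror the argument of \Cref{lem:convc:augment} but for domination rather than covering. First I would define $U = \{u \in V(G) : N[u] \cap (X \setminus \{v\}) = \emptyset\}$, the set of vertices that lose every dominator in $X$ once $v$ is removed. Because $X$ dominates $G$, every $u \in U$ satisfies $N[u] \cap X = \{v\}$, so $v \in N[u]$ and hence $U \subseteq N[v]$. Assuming the nondegenerate case $|X| \ge 2$, the connectivity of $G[X]$ forces $v$ to have at least one neighbor in $X$, which rules out $v \in U$ and restricts $U$ to $N(v) \setminus X$. In particular $|U| \le \deg(v) - |N(v) \cap X|$.

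Next, since $Y$ is itself a dominating set, for every $u \in U$ there exists some $w_u \in Y \cap N[u]$, and I would set $W = \{w_u : u \in U\}$. By the definition of $U$, every vertex outside $U$ is already dominated by $X \setminus \{v\}$, while each $u \in U$ is dominated by $w_u$, so $(X \setminus \{v\}) \cup W$ is a dominating set of $G$. The cardinality bound is immediate: $|W| \le |U| \le \deg(v) - |N(v) \cap X| \le \Delta$.

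For the component count I would use the basic fact that the number of components of $G[A \cup B]$ is at most the sum of the component counts of $G[A]$ and $G[B]$, for disjoint vertex sets $A$ and $B$. Applied with $A = X \setminus \{v\}$ and $B = W$, removing $v$ from the connected graph $G[X]$ yields at most $|N(v) \cap X|$ components (each containing a neighbor of $v$ in $X$), and $G[W]$ has at most $|W|$ components, so $G[(X \setminus \{v\}) \cup W]$ has at most $|N(v) \cap X| + |W| \le |N(v) \cap X| + (\deg(v) - |N(v) \cap X|) = \deg(v) \le \Delta$ components. The interplay is clean: more neighbors of $v$ in $X$ leave fewer undominated vertices, and vice versa, with the sum always bounded by $\deg(v)$. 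The only genuine obstacle is the degenerate case $X = \{v\}$, where $X \setminus \{v\} = \emptyset$ and the bound on $|U|$ breaks down; there $v$ alone dominates $G$, so $V(G) \subseteq N[v]$ forces $|Y| \le |V(G)| - 1 \le \Delta$, and I would simply take $W = Y$, yielding a single connected component that satisfies both bounds.
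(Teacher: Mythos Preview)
Your proof is correct and follows essentially the same route as the paper: identify the vertices in $N[v]$ left undominated by $X\setminus\{v\}$, pick for each a dominator from $Y$, and bound the component count by $|N(v)\cap X| + |W| \le |N(v)| \le \Delta$. The only cosmetic difference is that the paper always throws in a vertex $v' \in N(v)\cap Y$ to dominate $v$ (rather than relying on connectivity of $G[X]$ to show $v\notin U$) and does not separate out the case $|X|=1$; your explicit treatment of that degenerate case via $W=Y$ is a clean touch.
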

\begin{proof}
    Since $X$ is a dominating set of $G$, each vertex in $V(G) \setminus N[v]$ belongs to $X \setminus \{v\}$ or has a neighbor in $X \setminus \{v\}$.
    As $v \notin Y$, $Y$ contains at least one vertex~$v'$ in $N(v)$.
    Moreover, for each $w \in N(v) \setminus (X \cup \{v'\})$, $Y$ contains at least one vertex~$w'$ in $N[w]$.
    We let $W = \{v'\} \cup \{w' : w \in N(v) \setminus (X \cup \{v'\})\}$.
    Clearly, $W$ contains at most $|N(v) \setminus X| \le \Delta$ vertices.
    Each vertex in $N[v]$ belongs to $(X \setminus \{v\}) \cup W$ or has a neighbor in $(X \setminus \{v\}) \cup W$.
    This implies that $(X \setminus \{v\}) \cup W$ is a dominating set of $G$.

    Since $G[X]$ is connected, $G[(X \setminus \{v\})]$ has at most $|N(v) \cap X|$ components.
    As $|W| \le |N(v) \setminus X|$, $G[(X \setminus \{v\} \cup W]$ has at most $|N(v) \cap X| + |N(v) \setminus X| \le \Delta$ components.
\end{proof}

Let $W$ be a set of vertices of size at most $\Delta$ discussed in \Cref{lem:conds:dominate-neighbors} and let $X' = (X \setminus \{v\}) \cup W$ be a dominating set of $G$.
Since $Y$ is a dominating set of $G$, the following observation holds.

\begin{observation}\label{obs:conds:containment}
    For each component $C$ of $G[X']$, $N[C] \cap Y \neq \emptyset$.
\end{observation}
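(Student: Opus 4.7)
The plan is to derive the observation directly from the fact that $Y$ is a dominating set of $G$, without using any of the finer structure of $X'$ provided by Lemma~\ref{lem:conds:dominate-neighbors}. Fix any component $C$ of $G[X']$. Since $C$ is a connected component of an induced subgraph, it is nonempty, so I can pick an arbitrary vertex $u \in C$. Because $Y$ dominates $G$, we have $u \in N_G[Y]$, that is, either $u \in Y$ or $u$ has a neighbor in $Y$.

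In the first case $u$ itself lies in $C \cap Y \subseteq N[C] \cap Y$. In the second case, let $y \in N_G(u) \cap Y$; then $y \in N_G(C)$ by the definition $N_G(C) = \bigcup_{w \in C} N_G(w) \setminus C$ (if $y \notin C$) or $y \in C$ (if $y \in C$), and in either case $y \in N[C] \cap Y$. Either way, $N[C] \cap Y$ is nonempty.

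I do not anticipate a substantive obstacle here: the observation is essentially a restatement of the definition of a dominating set applied to a single vertex of $C$, and neither the connectivity of $C$, nor minimality of $Y$, nor the concrete construction $X' = (X \setminus \{v\}) \cup W$ is used. The only thing to be careful about is to invoke the paper's convention $N_G[C] = N_G(C) \cup C$ so that a vertex $y$ contributing to the dominating-set property is correctly placed in $N[C]$ whether or not it already belongs to $C$.
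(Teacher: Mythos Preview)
Your argument is correct and matches the paper's own justification, which simply notes that the observation follows immediately from $Y$ being a dominating set of $G$. You have just spelled out the one-line reasoning in slightly more detail.
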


\begin{lemma}\label{lem:conds:augment}
    Let $S \subseteq V$ be an arbitrary dominating set of $G$ and let $Y$ be a minimal connected dominating set of $G$.
    Then, there is a vertex set $U \subseteq Y$ of at most $2q - 2$ vertices such that $G[S \cup U]$ is connected, where $q$ is the number of connected components in $G[S]$.
\end{lemma}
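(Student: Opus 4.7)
The plan is to prove the lemma by induction on $q$, the number of connected components of $G[S]$. The base case $q=1$ is trivial with $U = \emptyset$, so I focus on the inductive step: I will show that if $q \geq 2$, there exist two components of $G[S]$ that can be merged by adding at most $2$ vertices from $Y$, after which the induction hypothesis finishes the proof (since $2 + (2(q-1)-2) = 2q - 2$).

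For two components $C_i, C_j$ of $G[S]$, define a \emph{$Y$-path} to be a path in $G$ with one endpoint in $C_i$, the other in $C_j$, and all internal vertices in $Y$. Let $d_Y(C_i, C_j)$ denote the minimum length of such a path. Such paths always exist: since $Y$ is a connected dominating set, each $C_i$ has a vertex in $N_G[C_i] \cap Y$, and $G[Y]$ is connected, so one can concatenate a $G[Y]$-path with (at most) one step into each component. The key claim is that when $q \geq 2$, some pair satisfies $d_Y(C_i, C_j) \leq 3$; granted this, the $\leq 2$ internal vertices of a shortest such path form the desired $U_0 \subseteq Y$ with $|U_0| \leq 2$.

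I prove the key claim by contradiction. Assume every pair satisfies $d_Y \geq 4$ and pick $C_i, C_j$ minimizing $d_Y$, realized by a shortest $Y$-path $P = u_0, u_1, \ldots, u_\ell$ with $\ell \geq 4$ (so $u_0 \in C_i$, $u_\ell \in C_j$, $u_1,\dots,u_{\ell-1}\in Y$). Now examine $u_2$: since $S$ is a dominating set, either $u_2 \in S$ (lying in some component $C_m$) or $u_2$ has a neighbor $s \in S$ (with $s \in C_m$). In each of the six resulting subcases ($m = i$, $m = j$, or $m \notin\{i,j\}$ crossed with the two cases), I can either extract a strictly shorter $Y$-path between $C_i$ and $C_j$ (contradicting the minimality of $P$) or exhibit a pair of components with $d_Y \leq 3$ (contradicting the assumption). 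For instance, if $u_2 \in S \cap C_m$ with $m \neq i, j$, the subpath $u_0, u_1, u_2$ witnesses $d_Y(C_i, C_m) \leq 2$; if $u_2 \notin S$ has a neighbor $s \in C_j$, then $u_0, u_1, u_2, s$ witnesses $d_Y(C_i, C_j) \leq 3$; and the remaining cases are analogous.

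Once $U_0$ is obtained, the path whose interior is $U_0$ lies entirely in $G[S \cup U_0]$ and connects $C_i$ to $C_j$, so $G[S \cup U_0]$ has at most $q-1$ components. Since $S \cup U_0$ remains a dominating set, the induction hypothesis yields $U_1 \subseteq Y$ with $|U_1| \leq 2(q-1)-2 = 2q-4$ making $G[S \cup U_0 \cup U_1]$ connected, and $U := U_0 \cup U_1 \subseteq Y$ has size at most $2q-2$. The main obstacle will be writing the six-fold case analysis concisely; the rest is straightforward bookkeeping.
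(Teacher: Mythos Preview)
Your proposal is correct and follows essentially the same approach as the paper: both proceed by induction on $q$, pick two components admitting a shortest ``through-$Y$'' path, and use the fact that $S$ is dominating to force that path to contribute at most two new vertices from $Y$. The only cosmetic difference is that the paper phrases the bridging path as a path in $G[Y]$ between $N[C]\cap Y$ and $N[C']\cap Y$ (arguing it has at most two vertices outside $S$), whereas you take endpoints in the components themselves and bound the length by $3$; the underlying case analysis is the same.
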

\begin{proof}
    We prove the lemma by induction on the number of connected components $q$ in $G[S]$.
    If $G[S]$ is connected, we are done.
    Suppose otherwise.
    Let $C$ and $C'$ be distinct components in $G[S]$.
    As both $N[C] \cap Y$ and $N[C'] \cap Y$ are nonempty (by \Cref{obs:conds:containment}), there is a path $P$ in $G[Y]$ connecting $N[C] \cap Y$ and $N[C'] \cap Y$.
    We choose $C$ and $C'$ to minimize the length of such a path $P$.
    The path $P$ contains at most two vertices of $V \setminus S$ as otherwise one of the internal vertices of $P$ has no neighbor in $S$, contradicting the fact that $S$ is a dominating set of $G$.
    Let $w,w'$ be these (possibly identical) two vertices of $V(P) \cap (V \setminus S)$. 
    Applying the induction to $S \cup \{w, w'\}$ proves the lemma.
\end{proof}

Thus, there is a set $U \subseteq Y$ of size at most $2\Delta - 2$ such that $G[X' \cup U]$ is a connected dominating set of $G$.
As $v \in X \setminus Y$ and $W \cup U \subseteq Y$, we have
\begin{align*}
    |X \setminus Y| > |(X \setminus \{v\}) \setminus Y|
    = |((X \setminus \{v\}) \cup W) \setminus Y|
    = |(X' \cup U)\setminus Y|
    \ge |Z \setminus Y|,
\end{align*}
where $Z$ is an arbitrary minimal connected dominating set of $G$ with $Z \subseteq X' \cup U$.
Now, let us formally define $\mathcal N^+(X)$ for $X \in \conds$.
For connected dominating set~$D$ of $G$, we let $\mu(D)$ be an arbitrary minimal connected dominating set of $G$ with $\mu(D) \subseteq D$.
By~\Cref{prop:monotone}, $\mu(D)$ can be computed in polynomial time.
We then define
\begin{align*}
    \mathcal N^+(X) &= \{\mu((X \setminus \{v\}) \cup W^*) : v \in X, W^* \subseteq V(G), |W^*| \le 3\Delta - 2,\\
    &\hspace{2cm} G[(X \setminus \{v\}) \cup W^*] \text{ is a connected dominating set of } G\}.
\end{align*}
By~\Cref{lem:conds:dominate-neighbors}, for each $Y \in \conds$,
there is a set of vertices $W \subseteq Y$ with cardinality at most $\Delta$ such that $(X \setminus \{v\}) \cup W$ is a dominating set of $G$. 
Moreover, by~\Cref{,lem:conds:augment}, there is a set of vertices $U \subseteq Y$ with cardinality at most $2\Delta - 2$ such that $(X \setminus \{v\}) \cup W \cup U$ is a connected dominating set of $G$.
Thus, for each $Y \in \conds$ with $X \neq Y$, $\mathcal N^+(X)$ has a minimal connected dominating set $Z \in \conds$ such that $|X \setminus Y| > |Z \setminus Y|$.
By~\Cref{thm:supergraph}, we have the following.

\begin{theorem}\label{thm:conds:degree}
    There is an $n^{3\Delta + O(1)}$-delay enumeration algorithm for \textsc{Minimal Connected Dominating Set Enumeration}.
\end{theorem}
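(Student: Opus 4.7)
The plan is to derive the theorem as a direct consequence of \Cref{thm:supergraph} applied to the neighborhood $\mathcal N^+$ defined just before the statement. Essentially all the structural work has already been done; what remains is to verify the hypothesis of \Cref{thm:supergraph} using \Cref{lem:conds:dominate-neighbors,lem:conds:augment}, and to bound the time needed to compute $\mathcal N^+(X)$.

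First I would argue that $\mathcal N^+$ satisfies the property required by \Cref{thm:supergraph}. Fix $X, Y \in \conds$ with $X \neq Y$, and pick any $v \in X \setminus Y$ (which exists by minimality). By \Cref{lem:conds:dominate-neighbors}, there is a vertex set $W \subseteq Y$ with $|W| \le \Delta$ such that $X' := (X \setminus \{v\}) \cup W$ is a dominating set and $G[X']$ has at most $\Delta$ components. By \Cref{lem:conds:augment} applied to $S = X'$, there is a set $U \subseteq Y$ with $|U| \le 2\Delta - 2$ such that $G[X' \cup U]$ is connected. Setting $W^* := W \cup U$ gives a set of size at most $3\Delta - 2$ for which $(X \setminus \{v\}) \cup W^*$ is a connected dominating set of $G$; thus $Z := \mu((X \setminus \{v\}) \cup W^*) \in \mathcal N^+(X)$. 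Since $v \in X \setminus Y$, $W^* \subseteq Y$, and $Z \subseteq (X \setminus \{v\}) \cup W^*$, the chain of inequalities already recorded in the excerpt yields $|X \setminus Y| > |Z \setminus Y|$, as needed.

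Next I would bound the time to enumerate $\mathcal N^+(X)$. Since $|X| \le n$ and we iterate over candidate vertices $v \in X$ together with all subsets $W^* \subseteq V(G)$ of size at most $3\Delta - 2$, the number of pairs $(v, W^*)$ is at most $n \cdot n^{3\Delta - 2} = n^{3\Delta - 1}$. For each pair we can check in polynomial time whether $(X \setminus \{v\}) \cup W^*$ induces a connected dominating set (simply test domination and run a BFS), and \Cref{prop:monotone} (plus a greedy shrinking argument analogous to the one already used for connected vertex covers) allows $\mu(\cdot)$ to be computed in polynomial time. Hence $\mathcal N^+(X)$ can be generated in time $n^{3\Delta + O(1)}$.

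Finally, invoking \Cref{thm:supergraph} with this choice of $\mathcal N^+$ and starting from any initial $S \in \conds$ (obtainable in polynomial time, e.g., by greedily shrinking $V(G)$ to a minimal connected dominating set via \Cref{prop:monotone}) gives a polynomial-delay enumeration with delay $n^{3\Delta + O(1)}$, which is the claim. The only mildly subtle point is the greedy shrinking step to compute $\mu$: I would note explicitly that connected dominating sets form a monotone family by \Cref{prop:monotone}, so vertices can be removed one at a time whenever the remaining set is still a connected dominating set, guaranteeing termination at a minimal one in polynomial time.
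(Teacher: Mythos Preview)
Your proposal is correct and follows essentially the same approach as the paper: you assemble \Cref{lem:conds:dominate-neighbors} and \Cref{lem:conds:augment} to produce a $Y$-contained augmentation $W^* = W \cup U$ of size at most $3\Delta - 2$, verify the inequality $|X \setminus Y| > |Z \setminus Y|$ for $Z = \mu((X\setminus\{v\})\cup W^*)$, bound the cost of generating $\mathcal N^+(X)$ by brute-forcing all $(v, W^*)$ pairs, and invoke \Cref{thm:supergraph}. This is exactly the argument the paper spells out in the paragraphs leading up to the theorem, and your added remarks on computing an initial solution and on why greedy shrinking works via \Cref{prop:monotone} are appropriate.
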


We would like to mention that a similar improvement for $d$-claw free graphs would not be straightforward.
In fact, we can show that \textsc{Minimal Connected Dominating Set Enumeration} is at least as hard as \textsc{Minimal Transversal Enumeration}, even on claw-free graphs.
A graph is said to be \emph{co-bipartite} if its complement is bipartite.
It is easy to observe that every co-bipartite graph has no independent set of size $3$, and hence it is claw-free.
\begin{theorem}\label{thm:cds:mds-hard-cobipartite}
    If there is an output-polynomial time (or a polynomial-delay)  algorithm for \textsc{Minimal Connected Dominating Set Enumeration} on co-bipartite graphs, then there is an output-polynomial time (or a polynomial-delay) algorithm for enumerating minimal transversals in hypergraphs. 
\end{theorem}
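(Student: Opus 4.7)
Given a hypergraph $\mathcal H = (V, \mathcal E)$, the plan is to construct a co-bipartite graph $G$ whose minimal connected dominating sets correspond, up to a polynomially bounded collection of ``trivial'' pairs, to the minimal transversals of $\mathcal H$. Set $V(G) = A \cup B$ with $A = V \cup \{a\}$ for a fresh vertex $a \notin V$ and $B = \{w_e : e \in \mathcal E\}$. Make each of $A$ and $B$ a clique, and add the inter-clique edge $\{v, w_e\}$ precisely when $v \in e$; in particular, $a$ has no neighbor in $B$. The complement of $G$ is bipartite with parts $(A, B)$, so $G$ is co-bipartite.

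\textbf{Characterization of minimal CDSs.} The main step is to prove that every minimal connected dominating set $D$ of $G$ is either (i) a minimal transversal of $\mathcal H$ viewed as a subset of $V \subseteq A$, or (ii) a pair $\{v, w_e\}$ with $v \in e$. First, $a \notin D$: if $D \cap V = \emptyset$ then $a$ is isolated in $G[D]$ (its only neighbors lie in $V$), so $D$ is either disconnected or fails to dominate some $w_e$; if $D \cap V \neq \emptyset$, removing $a$ preserves both domination (the $A$-clique is covered by any vertex of $D \cap V$) and connectivity, contradicting minimality. Second, if $D \cap B = \emptyset$ then $D \subseteq V$ is connected (it sits inside a clique), and it is a CDS if and only if every $w_e$ has a neighbor in $D$, i.e.\ iff $D$ is a transversal of $\mathcal H$, so minimal CDSs of this form coincide with minimal transversals. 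Finally, if $D \cap V \neq \emptyset$ and $D \cap B \neq \emptyset$, I argue $|D \cap V| = |D \cap B| = 1$: if $|D \cap V| \ge 2$, then either two distinct cross-edges from $D \cap V$ to $D \cap B$ exist (so one can drop a $V$-vertex while preserving connectivity), or at most one $V$-vertex of $D$ is incident to a $B$-vertex of $D$ and the other $V$-vertices play no role in connectivity nor domination and are therefore removable; an analogous argument on $B$ gives $|D \cap B| = 1$, so $D = \{v, w_e\}$ with $v \in e$.

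\textbf{Completing the reduction.} By the characterization, $G$ has at most $\tau + \sum_{e \in \mathcal E} |e|$ minimal CDSs, where $\tau$ is the number of minimal transversals of $\mathcal H$ and the second summand is polynomial in the input. Given an output-polynomial (resp.\ polynomial-delay) algorithm for \textsc{Minimal Connected Dominating Set Enumeration} on co-bipartite graphs, I run it on $G$ and, for each emitted set $D$, check in polynomial time whether $D \subseteq V$: if so, output $D$ as a minimal transversal of $\mathcal H$, otherwise discard it. The discarded outputs are all pairs $\{v, w_e\}$ with $v \in e$, whose total number is polynomial in the input, so both the output-polynomial time and the polynomial-delay guarantees transfer to the minimal transversal enumeration problem.

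\textbf{Main obstacle.} The subtle part is the case analysis for ``mixed'' $D$ with both $D \cap V \neq \emptyset$ and $D \cap B \neq \emptyset$. One must argue that any extra vertex beyond a single $V$-vertex and a single $B$-vertex can always be removed, which requires separately handling the degenerate case in which a unique vertex provides the only cross-edge between $D \cap V$ and $D \cap B$. The remainder of the argument is a routine verification using only that $A$ and $B$ are cliques and that $a$ has no $B$-neighbor.
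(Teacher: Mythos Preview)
Your construction is exactly the paper's (your vertex $a$ is their vertex $r$), and your characterization of the minimal connected dominating sets of $G$ matches theirs, so the proposal is correct and follows the same approach. The paper dispatches your ``main obstacle'' in one step rather than by a case split: once $G[D]$ is connected with vertices in both cliques, some cross-edge $\{v,w_e\}\subseteq D$ exists, and since $\{v,w_e\}$ alone is already a connected dominating set of $G$, minimality forces $D=\{v,w_e\}$.
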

\begin{proof}
    The reduction is in fact identical to one in \cite{KanteLMN:SIDMA:Enumeration:2014}, which proves a similar ``hardness'' of enumerating minimal dominating sets in co-bipartite graphs.

    Let $\mathcal H = (V, \mathcal E)$ be a hypergraph.
    We assume that $\mathcal E$ is nonempty and $\emptyset \not\in \mathcal E$ as otherwise the instance is trivial.
    From the incidence bipartite graph $\mathcal H$ (see~\Cref{thm:cvc:mds-hard}), we complete both $V$ and $V_{\mathcal E}$ into cliques.
    Then, we add a new vertex $r$, which is adjacent to all vertices in $V$.
    Let $G$ be the graph obtained in this way.
    It is easy to see that $G$ is co-bipartite and has two cliques $V \cup \{r\}$ and $V_{\mathcal E}$.

    Observe that every transversal $S \subseteq V $of $\mathcal H$ is a connected dominating set of~$G$.
    This follows from the facts that every vertex in $V_{\mathcal E}$ is adjacent to at least one vertex in $S$ and every vertex in $S$ is adjacent to other vertices in $V \cup \{r\}$.
    (Our assumption $\mathcal E\neq \emptyset$ implies that $S \neq \emptyset$.)

    Let $S$ be a minimal connected dominating set of $G$.
    As $r$ has no neighbors in $V_{\mathcal E}$, $S$ contains at least one vertex from $V \cup \{r\}$.
    Moreover, as $V_{\mathcal E} \neq \emptyset$ and $G[S]$ is connected, $S$ contains at least one vertex from $V$, implying that $r \notin S$ by the minimality of $S$.
    Then, observe that either $S = \{v, w_e\}$ for some $v \in V$ and $w_e \in V_{\mathcal E}$ with $v \in e$ or $S \subseteq V$.
    To see this, suppose that $S$ contains at least one vertex $w_e$ of $V_{\mathcal E}$.
    Then, as $S$ contains at least one vertex of $V$, it contains a vertex $v \in V$ that is adjacent to $w_e$ in $G$.
    Since these two vertices $v$ and $w_e$ dominate all the vertices in $G$, by the minimality of $S$, $S$ contains no other vertices at all.
    By a similar argument used in \Cref{thm:cvc:mds-hard}, $S$ is a minimal transversal of $\mathcal H$ if $S \subseteq V$ holds.

    Now, suppose that there is an output-polynomial time algorithm for \textsc{Minimal Connected Dominating Set Enumeration}.
    By applying the algorithm to $G$, we can enumerate all minimal connected dominating sets $\mathcal S$ of $G$.
    Since $\mathcal S$ contains all minimal transversals $\mathcal S'$ of $\mathcal H$ and $|\mathcal S \setminus \mathcal S'| = O(n^2)$, the algorithm enumerates all minimal transversals of $H$ in output-polynomial time as well.
\end{proof}

The above theorem reveals that the minimal connected dominating set enumeration is ``hard'' in some sense even for claw-free graphs, in contrast to the case of the minimal connected vertex cover enumeration~\Cref{thm:convc:claw}. 
Furthermore, the following theorem asserts that the minimal connected dominating set enumeration is also ``hard'' even for $2$-degenerate bipartite graphs. 
We would like to emphasize that the problem of enumerating minimal dominating sets in a bounded-degeneracy graph admits a polynomial-delay algorithm~\cite{EiterGM03}.
This indicates that the connectivity of solutions would make the problem more nontrivial, as seen in the case of \textsc{Minimal Connected Vertex Cover Enumeration}. 

\begin{theorem}
    If there is an output-polynomial time (or a polynomial-delay)  algorithm for \textsc{Minimal Connected Dominating Set Enumeration} on $2$-degenerate bipartite graphs, then there is an output-polynomial time (or a polynomial-delay) algorithm for enumerating minimal transversals in hypergraphs.     
\end{theorem}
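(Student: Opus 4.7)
The plan is to adopt essentially the same reduction used in \Cref{thm:cvc:mds-hard} and verify that it already yields a bijection between minimal connected \emph{dominating} sets of the constructed $2$-degenerate bipartite graph and minimal transversals of the input hypergraph. Given a hypergraph $\mathcal{H} = (V, \mathcal{E})$, I would take the graph $G'$ built there: for each hyperedge $e = \{u_1,\dots,u_d\}$, attach a path $P_e=(p^e_1,\dots,p^e_{2d+1})$ with a private pendant on each $p^e_i$, insert the edges $\{u_j,p^e_{2j-1}\}$, and add a vertex $r$ adjacent to every vertex in $V$ together with its own pendant $r'$. By \Cref{thm:cvc:mds-hard}, this graph is bipartite and $2$-degenerate, so no new constructional work is required.

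The heart of the argument would be a pendant-forcing lemma for CDS: in any minimal connected dominating set $D$ of $G'$ with $|D|\ge 2$, no pendant lies in $D$ and every vertex carrying a pendant must lie in $D$. Indeed, a pendant $p \in D$ would either be isolated in $G'[D]$ (violating connectivity) or removable because its unique neighbor is also in $D$ (violating minimality); and a pendant can only be dominated through its unique neighbor. The case $|D|=1$ is ruled out since $G'$ has no universal vertex. Consequently every $D \in \conds$ has the form $D = \{r\}\cup \bigcup_{e\in\mathcal{E}} V(P_e) \cup S$ for some $S \subseteq V$.

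It then remains to characterize the admissible $S$. Domination is automatic: $r$ dominates all of $V$ and each pendant is dominated by its path vertex, so only connectivity constrains $S$. Because each path $P_e$ meets the rest of $D$ only through the edges $\{u_j,p^e_{2j-1}\}$ with $u_j\in S$, while $r$ is adjacent to every $s\in S$, the induced subgraph $G'[D]$ is connected precisely when $S \cap e \neq \emptyset$ for every $e\in \mathcal{E}$, i.e., when $S$ is a transversal of $\mathcal{H}$. Minimality of $D$ then translates to minimality of $S$, since the forced vertices $\{r\}\cup\bigcup_e V(P_e)$ can never be removed. This yields the desired bijection, so any output-polynomial (resp.\ polynomial-delay) algorithm for \textsc{Minimal Connected Dominating Set Enumeration} on $2$-degenerate bipartite graphs, applied to $G'$ and followed by polynomial-time restriction to $V$, enumerates the minimal transversals of $\mathcal{H}$ within the same complexity bound. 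The main subtlety I expect is making the pendant-forcing case analysis tight under both connectivity and minimality constraints simultaneously, but this mirrors the CVC analysis and should be routine.
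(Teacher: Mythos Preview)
Your proposal is correct and uses the same construction $G'$ as the paper, with the same pendant-forcing argument to pin down the shape of any minimal connected dominating set. The only difference is cosmetic: the paper shortcuts the analysis by proving that the minimal connected vertex covers of $G'$ coincide with its minimal connected dominating sets (since the set $U$ of vertices carrying pendants is simultaneously a vertex cover and a dominating set, and the forced part $U$ together with minimality makes the two notions agree), and then invokes the bijection already established in \Cref{thm:cvc:mds-hard}; you instead re-derive the bijection with minimal transversals directly, which is equally valid but repeats work.
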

\begin{proof}
    Our reduction is identical with \Cref{thm:cvc:mds-hard}. 
    Let $G'$ be the $2$-degenerate bipartite graph obtained from $\mathcal H$, which is constructed in the same way as in \Cref{thm:cvc:mds-hard}.
    We show that every minimal connected vertex cover of $G'$ is a minimal connected dominating sets of $G'$ and vice-versa.
    In the following discussion, we denote by $L$ the set of pendant vertices in $G'$ and
    by $U \coloneqq V(G') \setminus (V \cup L)$ the set of vertices in $G'$ having a pendant neighbor.

    Let $C$ be a minimal connected vertex cover of $G'$.
    As observed in~\Cref{thm:cvc:mds-hard}, we have $U \cap L = \emptyset$ and $U \subseteq C$.
    Since $U$ dominates all the vertices of $G$, $C$ is a connected dominating set of $G'$.
    To see the minimality, suppose that there is a vertex $u \in C$ such that $C \setminus \{u\}$ is still a connected dominating set of $G'$.
    Since each vertex in $U$ has a private pendant vertex (i.e., a pendant vertex that is only adjacent to it), $u \notin U$, which implies that $u \in V$.
    As $U$ covers all the edges of $G'$, $C \setminus \{u\}$ is also a minimal connected vertex cover of $G'$, deriving a contradiction.

    Let $D$ be a minimal connected dominating set of $G'$.
    Analogously, $D$ is a connected vertex cover of $G'$ containing $U$.
    Suppose that there is a vertex $u \in D$ such that $D \setminus \{u\}$ is a connected vertex cover of $G'$.
    As $U$ dominates all the vertices of $G'$ and $u \in D \setminus U$, $D \setminus \{u\}$ is a connected dominating set of $G'$, contradicting the minimality of $D$.
\end{proof}

\section{Capacitated vertex cover and dominating set}\label{sec:capacity}
This section is devoted to showing polynomial-delay algorithms for the capacitated variants on bounded-degree graphs.

The basic idea to prove \Cref{thm:convc:degree,thm:conds:degree} is that for distinct minimal solutions $X$ and $Y$ and for $v \in X \setminus Y$, there always exists a constant size (depending on $\Delta$) vertex set $W \subseteq Y$ such that $(X \setminus \{v\}) \cup W$ is a (not necessarily minimal) solution.
For \textsc{Minimal Capacitated Vertex Cover Enumeration} and \textsc{Minimal Capacitated Dominating Set Enumeration}, we can find such a set $W$ in bounded degree graphs as well.
Let $G$ be a graph with maximum degree~$\Delta$ and let $c\colon V(G) \to \mathbb N$.
Let $q = \max_{v \in V(G)}c(v)$.
Without loss of generality, we can assume that $q \le \Delta$.

\begin{lemma}\label{lem:capvc:augment}
    Let $X, Y$ be distinct minimal capacitated vertex covers of $(G, c)$ and let $v \in X \setminus Y$.
    Then, there is a vertex set $W \subseteq Y$ of size at most $q$ such that $(X \setminus \{v\}) \cup W$ is a capacitated vertex cover of $(G, c)$.
\end{lemma}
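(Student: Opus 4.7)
The plan is to argue via a bipartite matching augmentation, exploiting both a valid assignment for $X$ and one for $Y$ simultaneously. Fix such valid assignments $\alpha\colon E(G)\to X$ and $\beta\colon E(G)\to Y$, and set $E_v = \alpha^{-1}(v)$, so $|E_v|\le c(v)\le q$. Because $v\notin Y$ and $Y$ is a vertex cover, each $e\in E_v$ has its other endpoint in $N(v)\subseteq Y$, and in particular $\beta(e)\neq v$. The goal is to reassign the edges of $E_v$ off of $v$, using pieces of $\beta$ to push edges around without violating any capacity constraint; the naive choice $W=\{\text{other endpoints of } E_v\}$ fails only when some $u\in W\cap X$ is already at capacity under $\alpha$, and a standard augmenting-path argument will resolve exactly this.

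The first step is to pass to the bipartite capacity graph $H$ from the proof of~\Cref{prop:feasibility-check}, enlarged so that the right side is $\{w^i_u : u\in V(G),\, 1\le i\le c(u)\}$, and to view $\alpha,\beta$ as matchings $M_\alpha,M_\beta$ of $H$ that both saturate $V_E=\{w_e:e\in E(G)\}$; $M_\alpha$ uses only slots of $X$, $M_\beta$ only slots of $Y$, and $M_\beta$ uses no slot of $v$. The next step is to analyse $M_\alpha\triangle M_\beta$. Every $w_e\in V_E$ has degree exactly $2$ there (it is saturated by both matchings), so the components are alternating paths and even cycles, and each path must end at a right-side vertex (any $V_E$-vertex could otherwise be extended).

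The $|E_v|$ slots of $v$ saturated by $M_\alpha$ have degree $1$ in $M_\alpha\triangle M_\beta$ and are therefore endpoints of $|E_v|$ alternating paths $P_1,\dots,P_{|E_v|}$. Each $P_i$ begins with an $M_\alpha$-edge at $w^i_v$ and, being forced to end on the right side, must terminate with an $M_\beta$-edge at some right vertex $w^{j_i}_{u_i}$; hence $u_i\in Y$. Being distinct connected components of $M_\alpha\triangle M_\beta$, these paths are pairwise vertex-disjoint, so one can simultaneously augment by taking $M'_\alpha = M_\alpha \triangle \bigcup_i E(P_i)$. The resulting $M'_\alpha$ still saturates $V_E$, leaves every slot of $v$ unmatched, and differs from $M_\alpha$ only in that internal right-slots along the paths (which necessarily already belong to $X$, since they were matched in $M_\alpha$) swap their matched edge, while the $|E_v|$ terminal slots of the $P_i$ become newly matched; those terminal slots belong to vertices $u_1,\dots,u_{|E_v|}\in Y$. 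Taking $W=\{u_1,\dots,u_{|E_v|}\}\subseteq Y$ therefore gives $|W|\le |E_v|\le q$, and $M'_\alpha$ is precisely the assignment witnessing that $(X\setminus\{v\})\cup W$ is a capacitated vertex cover.

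The main obstacle is confirming that the paths $P_i$ genuinely terminate at $Y$-slots rather than at some other kind of vertex; this follows from the parity of alternating paths starting at $w^i_v$ combined with the fact that $V_E$-vertices are saturated by both matchings, forcing $P_i$ to have even length and end with an $M_\beta$-edge, whose right endpoint must be a $Y$-slot.
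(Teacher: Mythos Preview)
Your proof is correct and follows essentially the same approach as the paper: both pass to the bipartite capacity graph $H$ of \Cref{prop:feasibility-check} and augment along $M_\alpha$/$M_\beta$ alternating paths starting at the $v$-slots to reassign the edges of $\alpha^{-1}(v)$ onto $Y$-slots; the paper processes these paths one at a time, whereas you handle all of them simultaneously via the component structure of $M_\alpha\triangle M_\beta$. One minor slip: a vertex $w_e\in V_E$ has degree $0$ or $2$ in $M_\alpha\triangle M_\beta$ (not always exactly $2$, since both matchings may assign $e$ to the same slot), but this does not affect your conclusion that every path endpoint lies on the right side.
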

\begin{proof}
    Let $\alpha_X \colon E(G) \to X$ (resp. $\alpha_Y \colon E(G) \to Y$) be a function such that $(X, \alpha_X)$ (resp. $(Y, \alpha_Y)$) is a capacitated vertex cover of $(G, c)$.
    As $c(v) \le q$, $\alpha_X^{-1}(v)$ contains at most $q$ edges.
    For each $e = \{v, w\} \in \alpha_X^{-1}(v)$, we define a vertex $v_e \in Y$, and show that $(X \setminus \{v\}) \cup W$ with $W = \{v_e : e \in \alpha_X^{-1}(v)\}$ is a capacitated vertex cover of $(G, c)$.

    To define such a vertex $v_e$ for $e \in \alpha_X^{-1}(v)$, we use a similar strategy as used in \Cref{prop:feasibility-check}.
    Let $H$ be a bipartite graph with $V(H) = V_E \cup V$ such that $V_E = \{w_e : e \in E(G)\}$ and $V = \{w^i_u : u \in V(G), 1 \le i \le c(u)\}$.
    As shown in \Cref{prop:feasibility-check}, the capacitated vertex cover $(X, \alpha_X)$ (resp. $(Y, \alpha_Y)$) forms a matching $M_X$ (resp. $M_Y$) saturating $V_E$ in $H$ and vice versa.
    As $v \in X \setminus Y$, there is a vertex $w^i_v$ that is matched in $M_X$ but not matched in $M_Y$.
    Since both $M_X$ and $M_Y$ are maximum matchings in $H$, there is an alternating path $P$ between $w^i_v$ and $w^j_{u}$ for some $w^j_{u}$ that is not matched in $M_X$ but matched in $M_Y$.
    This implies that $u \in Y$, and we define $v_e \coloneqq u$.
    Then we obtain a new matching $(M_X \setminus E(P)) \cup (E(P) \cap M_Y)$ saturating $V_E$, which induces a capacitated vertex cover $(X\cup \{u\}, \alpha_{X \cup \{u\}})$ such that $|\alpha^{-1}_{X\cup \{u\}}(v)| < |\alpha^{-1}(v)|$.
    We repeat this for all other edges in $\alpha^{-1}_{X \cup \{u\}}(v)$, yielding a desired set $W$.
\end{proof}

An analogous lemma also holds for minimal capacitated dominating sets with a sightly more involved argument.
The proof is deferred to the full version due to the space limitation.

\begin{lemma}\label{lem:capds:augment}
    Let $X, Y$ be distinct minimal capacitated dominating sets of $(G, c)$ and let $v \in X \setminus Y$.
    Then, there is a vertex set $W \subseteq Y$ of size at most $q + 1$ such that $(X \setminus \{v\}) \cup W$ is a capacitated dominating set of $(G, c)$.
\end{lemma}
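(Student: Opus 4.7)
The plan is to adapt the argument of \Cref{lem:capvc:augment} to the capacitated dominating set setting, using the bipartite graph construction of \Cref{prop:feasibility-check} and alternating paths in the symmetric difference of the two certifying matchings. Let $\beta_X$ and $\beta_Y$ be the assignment functions witnessing that $X$ and $Y$ are capacitated dominating sets, and let $M_X$ and $M_Y$ be the corresponding matchings in the bipartite graph $H$ saturating $\{w_u : u \notin X\}$ and $\{w_u : u \notin Y\}$, respectively. Removing $v$ from $X$ creates two kinds of uncovered demands in $H$: the (at most $c(v) \le q$) demand vertices $w_{u_j}$ with $\beta_X(u_j) = v$, and the demand $w_v$ itself, which becomes uncovered because $v \notin Y$ forces $v \notin X \setminus \{v\}$.

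For each uncovered demand I will trace its path component in $M_X \oplus M_Y$. Each previously matched slot $w^i_v$ is an endpoint of such a path (being $M_X$-matched but $M_Y$-unmatched), and $w_v$ is an endpoint of another (being $M_Y$-matched but $M_X$-unmatched). As in the proof of \Cref{lem:capvc:augment}, taking the symmetric difference with each path yields a matching that no longer uses any slot of $v$ while re-routing the affected demand. Because internal vertices of a path in $M_X \oplus M_Y$ are matched in both $M_X$ and $M_Y$, every internal slot $w^i_y$ on such a path must satisfy $y \in X \cap Y \subseteq X \setminus \{v\}$, so the only potentially new vertex of $Y \setminus X$ introduced by a single path is its terminal $Y$-vertex. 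Collecting these terminal vertices, together with any freed demand $u_j$ that happens to lie in $Y$ (which can be placed directly into $W$ without invoking a path), gives the augmenting set $W$. The combined symmetric difference then produces a matching $M'$ that saturates all remaining demands using only slots in $(X \setminus \{v\}) \cup W$, and by \Cref{prop:feasibility-check} this certifies that $(X \setminus \{v\}) \cup W$ is a capacitated dominating set of $(G, c)$.

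Since there are at most $k + 1 \le q + 1$ uncovered demands, where $k = |\beta_X^{-1}(v)| \le c(v)$, and each contributes at most one new vertex to $W$, the bound $|W| \le q+1$ follows. The main obstacle will be twofold: first, the new wrinkle of $w_v$ being an additional ``freed'' demand absent in \Cref{lem:capvc:augment}, which accounts for the extra $+1$ in the bound; and second, carefully handling the case in which two uncovered demands happen to be the two endpoints of a single path component in $M_X \oplus M_Y$ (for instance, $w_v$ together with some $w^i_v$, or two distinct slots $w^i_v$ and $w^{i'}_v$), where one must verify that the shared path simultaneously resolves both issues while still contributing at most one vertex to $W$ and while respecting the capacity of every vertex of $X \setminus \{v\}$ that appears internally.
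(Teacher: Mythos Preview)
Your plan is essentially the same as the paper's proof: both use the bipartite graph $H$ from \Cref{prop:feasibility-check}, view $\beta_X$ and $\beta_Y$ as matchings $M_X$ and $M_Y$, and trace alternating paths in $M_X \triangle M_Y$ starting from the (at most $q$) used slots $w^i_v$ and from the demand vertex $w_v$ to locate one vertex of $Y$ per path that can be added to $W$. The paper processes these paths sequentially (updating the matching after each), whereas you take the symmetric difference with all of them at once, but the underlying argument and the resulting bound $|W|\le q+1$ are identical.

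One remark: your caveat about two uncovered demands possibly being the endpoints of a \emph{single} path component is unnecessary. A quick parity check shows that every used slot $w^i_v$ and the demand $w_v$ lie in pairwise distinct components of $M_X \triangle M_Y$ (the first edge out of any $w^i_v$ is an $M_X$-edge, the first edge out of $w_v$ is an $M_Y$-edge, and the bipartition of $H$ forces the last edge of each such path to lie in $M_Y$ on the slot side or $M_X$ on the demand side, ruling out another $w^{i'}_v$ or $w_v$ as the opposite endpoint). So that ``main obstacle'' does not actually arise, and the rest of your outline goes through exactly as the paper's proof does.
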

\begin{proof}
    Let $\beta_X \colon V(G) \setminus X \to X$ (resp. $\beta_Y \colon V(G) \setminus Y\to Y$) be a function such that $(X, \beta_X)$ (resp. $(Y, \beta_Y)$) is a capacitated dominating set of $(G, c)$.
    As $c(v) \le q$, $\beta_X^{-1}(v)$ contains at most $q$ vertices.
    For each $u \in \beta_X^{-1}(v)$, we define a vertex $u^* \in Y$, and show that $(X \setminus \{v\}) \cup W$ with $W = \{u^* : u \in \beta_X^{-1}(v) \cup \{v\}\}$ is a capacitated dominating set of $(G, c)$.

    To define such a vertex $u^*$ for $u \in \beta_X^{-1}(v) \cup \{v\}$, we again use a similar strategy as in \Cref{prop:feasibility-check}.
    Let $H$ be a bipartite graph with $V(H) = V' \cup V$ such that $V' = \{w_u : u \in V(G)\}$ and $V = \{w^i_u : u \in V(G), 1 \le i \le c(u)\}$.
    Abusing notation, for $A, B \subseteq V(G)$, we simply write $H[A, B]$ to denote the graph $H[\{w_u : u \in A\} \cup \{w_u^i: u \in B, 1 \le i \le c(u)\}]$.
    As shown in \Cref{prop:feasibility-check}, the capacitated dominating set $(X, \beta_X)$ (resp. $(Y, \beta_Y)$) forms a matching $M_X$ (resp. $M_Y$) saturating $\{w_u : u \in V(G) \setminus X\}$ (resp. $\{w_u : u \in V(G) \setminus Y\}$) in $H[V(G) \setminus X, X]$ (resp. $H[V(G)\setminus Y, Y]$) and vice versa.
    As $v \in X$, for $u \in \beta^{-1}_X(v)$, $M_X$ has an edge $\{w_u, w^{i^*}_v\}$ for some $i^*$.
    Moreover, as $v \notin Y$, $w^i_v$ has no incident edge in $M_Y$ for any $i$.
    Thus, the component of $M_X \triangle M_Y$ containing $w^{i^*}_v$ is an alternating path $P$ whose one end vertex is $w^{i^*}_v$.  
    \begin{figure}[t]
        \centering
        \includegraphics{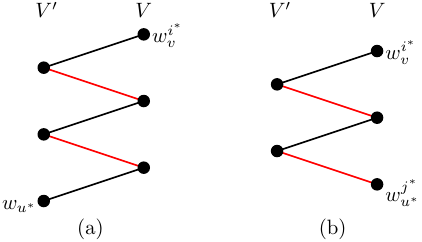}
        \caption{The figure depicts matchings $M_X$ (black edges) and $M_Y$ (red edges) in $H$.}
        \label{fig:capds-matching}
    \end{figure}
    
    Suppose that the other end vertex of $P$ belongs to $V'$.
    Let $w_{u^*}$ be the end vertex of $P$ other than $w^{i^*}_v$.
    In this case, we have $u^* \notin X$ as in~\Cref{fig:capds-matching}~(a).
    Then, $(M_X \setminus E(P)) \cup (M_Y \cap E(P))$ is a matching in $H[V(G)\setminus (X \cup \{u^*\}), X \cup \{u^*\}]$ saturating $\{w_x : x \in V(G) \setminus (X \cup \{u^*\})\}$.
    Suppose otherwise that the other end vertex of $P$ belongs to $V$.
    Let $w^{j^*}_{u^*}$ be the end vertex of $P$ other than $w^{i^*}_v$ as in~\Cref{fig:capds-matching}~(b).
    Then, $(M_X \setminus E(P)) \cup (M_Y \cap E(P))$ is a matching in $H[V(G)\setminus (X \cup \{u^*\}), X \cup \{u^*\}]$ saturating $\{w_x : x \in V(G)\setminus (X \cup \{u^*\})\}$.
    In both cases, we have $u^* \in Y$ as (a) $w_{u^*}$ has no incident edge in $M_Y$ or (b) $w^{j^*}_{u^*}$ has an incident edge in $M_Y$.
    Let $M^*$ be the matching in $H$ by applying one of the two cases (a) and (b) for all $u \in \beta^{-1}(v)$.
    Then, by letting $W^* = \{u^* \in Y : u \in \beta^{-1}(v)\}$, $M^*$ saturates $\{w_u : u \in V(G) \setminus ((X \setminus \{v\}) \cup W^*)\}$, and hence $(X \setminus \{v\}) \cup W^*$ dominates all the vertices of $V(G)$ except for $v$.

    Similarly, $w_v$ has an incident edge in $M_Y$ but no incident edge in $M^*$, there is an alternating path component $P$ in $M^* \triangle M_Y$ whose end vertex is $w_v$.
    Note that $P$ has at least one edge.
    Note moreover that $P$ has no vertices of $w^i_{v}$ for any $i$ as $M^*$ has no edges incident to them and $v \notin Y$.
    By applying the same argument as above, we can define $v^* \in Y$, and by letting $W = W^* \cup \{v^*\}$, there is a matching in $H[V(G) \setminus ((X \setminus \{v\}) \cup W), (X \setminus \{v\}) \cup W]$ saturating $\{w_u : u \in V(G) \setminus ((X \setminus \{v\}) \cup W)\}$.
    This implies that, by~\Cref{prop:feasibility-check}, $(X \setminus \{v\}) \cup W$ is a capacitated dominating set of $(G, c)$.
\end{proof}

The above two lemmas together with \Cref{prop:feasibility-check,thm:supergraph} yield polynomial-delay algorithms on bounded-degree graphs.
\begin{theorem}\label{thm:cap:alg}
    There are $n^{\min\{q, \Delta\} + O(1)}$-delay enumeration algorithms for \textsc{Minimal Capacitated Vertex Cover Enumeration} and \textsc{Minimal Capacitated Dominating Set Enumeration}, where $q = \max_{v \in V(G)}c(v)$ and $\Delta$ is the maximum degree of $G$.
\end{theorem}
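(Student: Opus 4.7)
The plan is to apply the supergraph technique of \Cref{thm:supergraph} to both $\capvc$ and $\capds$, combining \Cref{prop:feasibility-check} (polynomial-time feasibility testing) with the augmentation lemmas \Cref{lem:capvc:augment,lem:capds:augment}. As noted just before \Cref{lem:capvc:augment}, any capacity exceeding $\Delta$ is never binding, so we may assume $q \le \Delta$; it then suffices to prove an $n^{q+O(1)}$-delay bound for each problem.

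For \textsc{Minimal Capacitated Vertex Cover Enumeration}, I would define for $X \in \capvc$
\begin{align*}
    \mathcal N^+(X) = \{\mu((X \setminus \{v\}) \cup W) : {} & v \in X,\ W \subseteq V(G),\ |W| \le q, \\
    & (X \setminus \{v\}) \cup W \text{ is a capacitated vertex cover of } (G, c)\},
\end{align*}
where $\mu(C)$ returns an arbitrary minimal capacitated vertex cover contained in $C$. Because the collection of capacitated vertex covers is monotone (noted in \Cref{sec:preli}) and feasibility is testable in polynomial time by \Cref{prop:feasibility-check}, $\mu(C)$ can be computed in polynomial time by greedily deleting vertices while feasibility is preserved. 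The number of candidate pairs $(v, W)$ is at most $n \cdot \binom{n}{q} = n^{q+O(1)}$, and each is processed in polynomial time, so $\mathcal N^+(X)$ is generated in $n^{q+O(1)}$ total time.

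To verify the hypothesis of \Cref{thm:supergraph}, I would take $X, Y \in \capvc$ with $X \neq Y$, pick any $v \in X \setminus Y$, and invoke \Cref{lem:capvc:augment} to obtain $W \subseteq Y$ with $|W| \le q$ such that $X' = (X \setminus \{v\}) \cup W$ is a capacitated vertex cover of $(G, c)$. Since $W \subseteq Y$ and $v \in X \setminus Y$, we have $X' \setminus Y = (X \setminus Y) \setminus \{v\}$, so setting $Z = \mu(X') \in \mathcal N^+(X)$ yields $|Z \setminus Y| \le |X' \setminus Y| = |X \setminus Y| - 1 < |X \setminus Y|$, as required. \Cref{thm:supergraph} then delivers the $n^{q+O(1)}$-delay algorithm.

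The capacitated dominating set case is structurally identical: define $\mathcal N^+(X)$ by ranging over sets $W$ of size at most $q+1$, use the dominating-set part of \Cref{prop:feasibility-check} both to test feasibility and to implement $\mu$, and invoke \Cref{lem:capds:augment} in place of \Cref{lem:capvc:augment}; the same cardinality bookkeeping gives $|Z \setminus Y| < |X \setminus Y|$. There is no genuine obstacle here, since all structural content has already been packaged into the augmentation lemmas and the feasibility proposition; the only point to watch is that $W \subseteq Y$ forces $W \cap (X \setminus Y) = \emptyset$, so that swapping $v$ for $W$ strictly decreases the symmetric-difference measure used by \Cref{thm:supergraph}, and combining the two cases with the assumption $q \le \Delta$ produces the $n^{\min\{q,\Delta\}+O(1)}$-delay bound.
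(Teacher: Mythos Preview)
Your proposal is correct and follows essentially the same approach as the paper, which simply states that \Cref{lem:capvc:augment,lem:capds:augment} together with \Cref{prop:feasibility-check,thm:supergraph} yield the result; you have merely spelled out the definition of $\mathcal N^+(X)$ and the cardinality bookkeeping that the paper leaves implicit.
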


Similarly to \Cref{thm:cvc:mds-hard}, \textsc{Minimal Capacitated Vertex Cover Enumeration} is ``not so easy'' even if an input graph is $2$-degenerate bipartite.

\begin{theorem}\label{thm:capvc:mds-hard}
    If there is an output-polynomial time (or a polynomial-delay) algorithm for \textsc{Minimal Capacitated Vertex Cover Enumeration} on $2$-degenerate bipartite graphs, then there is an output-polynomial time (or a polynomial-delay) algorithm for enumerating minimal transversals in hypergraphs.
\end{theorem}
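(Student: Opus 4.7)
The plan is to adapt the reduction from \Cref{thm:cvc:mds-hard}. Given a hypergraph $\mathcal{H}=(V,\mathcal{E})$, I would construct essentially the same $2$-degenerate bipartite graph $G'$ used there, equipped with an appropriate capacity function $c$ so that a bijection analogous to the one for connected vertex covers identifies minimal capacitated vertex covers of $(G',c)$ with minimal transversals of $\mathcal{H}$.

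To motivate the capacities, first consider the ``unrefined'' incidence bipartite graph $G$ with a pendant $w'_e$ attached to each $w_e$, setting $c(w'_e)=0$, $c(w_e)=|e|$, and $c(v)=\deg_G(v)$ for $v\in V$. The zero capacity of each pendant forces $w_e$ into every capacitated vertex cover, and since $w_e$ has degree $|e|+1$ but capacity $|e|$ (one unit of which must go to the pendant edge), at least one V-edge $\{w_e,u_j\}$ must be covered by $u_j$, so $C\cap V$ is a transversal of $\mathcal{H}$; conversely, any transversal $S$ yields a capacitated vertex cover $V_{\mathcal{E}}\cup S$, and minimality matches up in a routine way.

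Transporting this to the $2$-degenerate refinement of \Cref{thm:cvc:mds-hard}---each $w_e$ replaced by a path $P_e=(p_1^e,\ldots,p_{2d+1}^e)$ with a pendant on every $p_i^e$ and edges $\{u_j,p_{2j-1}^e\}$---I would set $c(\text{pendant})=0$, $c(p_i^e)=\deg_{G'}(p_i^e)-1$, and $c(u)=\deg_{G'}(u)$ for $u\in V$. Pendants again force every $p_i^e$ into $C$. Counting edges and capacities in each gadget $P_e$ yields a deficit of exactly one unit: the $5d+1$ gadget edges cannot all be covered using only the $5d$ units of total path-vertex capacity, so at least one V-edge must be covered externally by some $u_j$, again making $C\cap V$ a transversal.

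The hardest part is the converse feasibility: given a transversal $T$ of $\mathcal{H}$, verifying that $V_{\text{path}}\cup T$ is really a capacitated vertex cover, since the capacity budget is razor-tight. My plan is to argue gadget by gadget: for each hyperedge $e$, pick $u_{j^\ast}\in T\cap e$, let $u_{j^\ast}$ cover its V-edge, let every other $u_j\in e$ have its V-edge covered by $p_{2j-1}^e$, and assign the $2d$ path edges via two forced cascades---one starting at the right endpoint $p_{2d+1}^e$, whose residual capacity after its pendant is zero, propagating leftward, and a symmetric one starting from $p_1^e$ propagating rightward. The cascades meet precisely at $p_{2j^\ast-1}^e$, the unique path vertex that carries exactly one extra unit of residual slack thanks to not having to cover its V-edge, so the assignment is consistent. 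Once this feasibility direction is verified, the minimality correspondence goes through exactly as in \Cref{thm:cvc:mds-hard}, and since $G'$ and $c$ are constructed in polynomial time, any output-polynomial or polynomial-delay enumeration algorithm on $2$-degenerate bipartite graphs gives the claimed algorithm for minimal transversals.
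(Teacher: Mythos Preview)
Your proposal is correct and follows essentially the same approach as the paper: the same incidence-bipartite construction with pendants, the same path-gadget refinement, and the identical capacity function $c'(p_i^e)=\deg(p_i^e)-1$, $c'(u)=\deg(u)$, $c'(\text{pendant})=0$, with the same one-unit deficit argument forcing $C\cap V$ to be a transversal. Your explicit cascade assignment for the feasibility direction is in fact more detailed than what the paper spells out (the paper simply asserts that the correspondence ``allows us to simulate'' the one from \Cref{thm:cvc:mds-hard}), and the minor point that your graph omits the root vertex $r$ from \Cref{thm:cvc:mds-hard} matches the paper's construction here as well.
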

\begin{proof}
    The underlying idea of the proof is analogous to \Cref{thm:cvc:mds-hard}.
    From a hypergraph $\mathcal H = (V, \mathcal E)$, we first construct a bipartite graph $G$ and a capacity function $c \colon V(G) \to \mathbb N$ and then transform them into a $2$-degenerate bipartite graph $G'$ and a function $c' \colon V(G') \to \mathbb N$.
    Notice that our reduction from the minimal transversal enumeration to 
    \textsc{Minimal Capacitated Vertex Cover Enumeration} is almost the same.

    We start with the incident bipartite graph of $\mathcal H$ (see~\Cref{thm:cvc:mds-hard} for details).
    For each $w_e \in V_{\mathcal E}$, we add a pendant vertex $w_e'$, and the obtained graph is denoted by $G$.
    See~\Cref{fig:capvc:reduction}~(a) for an illustration.
    \begin{figure}[t]
    \centering
    \includegraphics{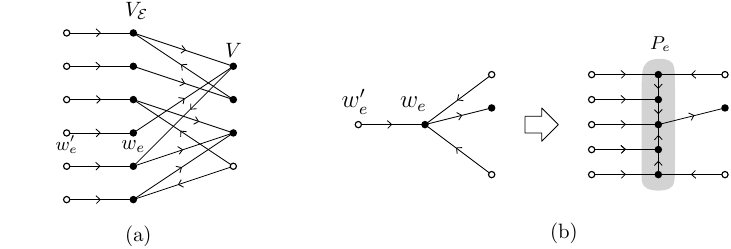}
    \caption{The left figure illustrates the constructed graph $G$. Filled circles indicate a minimal capacitated vertex cover of $(G, c)$ and arrows indicate the function $\alpha$.
    The right figure illustrates a part of the construction of a $2$-degenerate graph $G'$ from $G$. }
    \label{fig:capvc:reduction}
    \end{figure}
    We define the capacity function $c \colon V(G) \to \mathbb N$ as
    \begin{align*}
        c(v) = \begin{cases}
            d(v) & \text{if } v \in V\\
            d(v) - 1 & \text{if } v \in V_{\mathcal E}\\
            0 & \text{otherwise}.
        \end{cases}
    \end{align*}
    Here, $d(v)$ is the degree of $v$ in $G$.

    Let $C$ be an arbitrary minimal capacitated vertex cover of $(G, c)$ and let $\alpha \colon E(G) \to C$ be a function such that $(C, \alpha)$ is a capacitated vertex cover of $(G, c)$.
    As $c(w_e') = 0$ for $e \in \mathcal E$, we have $V_{\mathcal E} \subseteq C$.
    Moreover, $\alpha$ maps the incident edge of $w'_e$ to $w_e$ for each $e \in \mathcal E$.
    Since $c(w_e) = d(v) - 1$, $\alpha$ maps at least one edge incident to $w_e$ to a vertex $v$ in $C \cap V$.
    By the construction of $G$, such a vertex $v$ is contained in the hyperedge $e$.
    This implies that $S = C \cap V$ is a transversal of $\mathcal H$.
    The remaining part of this proof is analogous to that in \Cref{thm:cvc:mds-hard}.

    We now transform $(G, c)$ into $(G', c')$.
    The construction of $G'$ is again analogous to that in \Cref{thm:cvc:mds-hard}.
    For each $e \in \mathcal E$ with $|e| = d$, we replace $w_e$ and $w'_e$ with a path $P_{e} = (p^2_1, \ldots, p^e_{2d+1})$ and $2d+1$ pendant vertices in the same fashion as in \Cref{thm:cvc:mds-hard}.
    The obtained graph is denoted by $G'$.
    We define the capacity function $c' \colon V(G') \to \mathbb N$ as
    \begin{align*}
        c'(v) = \begin{cases}
            d'(v) & \text{if } v \in V\\
            d'(v) - 1 & \text{if } v \in V(P_e) \text{ for } e \in \mathcal E\\
            0 & \text{otherwise},
        \end{cases}
    \end{align*}
    where $d'(v)$ is the degree of $v$ in $G'$.
    (See \Cref{fig:capvc:reduction}~(b) for an illustration.)

    Let $C$ be an arbitrary minimal capacitated vertex cover of $(G', c')$ and let $\alpha \colon E(G') \to C$ be a function such that $(C, \alpha)$ is a capacitated vertex cover of $(G', c')$.
    As the capacity of the pendant vertex adjacent to each vertex $p^e_i$ in $V(P_e)$ is zero, $\alpha$ maps the edge between them to $p^e_i$, which implies that $V(P_e) \subseteq C$ for all $e \in \mathcal E$.
    Since the sum of the capacities of the vertices in $V(P_e)$ is exactly one less than the number of edges incident to $V(P_e)$, $\alpha$ maps at least one edge to a vertex in $V$.
    This allows us to simulate the correspondence between the family of minimal transversals of $\mathcal H$ and the family of minimal capacitated vertex covers of~$G$.
\end{proof}

\section{Concluding remarks}
In this paper, we present polynomial-delay algorithms for enumerating minimal connected/capacitated vertex covers/dominating sets on bounded-degree graphs.
Moreover, the algorithm for enumerating minimal connected vertex covers can be extended into the one working on $d$-claw free graphs.
In contrast to these positive results, we show that the problems of enumerating minimal connected/capacitated vertex covers in $2$-degenerate bipartite graphs are at least as hard as that of enumerating minimal transversals in hypergraphs, which has no known output-polynomial time algorithm until now.
For enumerating minimal connected vertex covers in general graphs, we develop an output quasi-polynomial time enumeration algorithm.

We would like to mention that we can enumerate minimal vertex covers satisfying both connectivity and capacity constraints in polynomial delay on bounded-degree graphs.
This can be done by just observing that for each minimal connected and capacitated vertex covers $X, Y$ of $(G, c)$ and for $v\in X$, there is a vertex set $W \subseteq Y$ of size at most $\Delta$ such that $(X \setminus v\}) \cup W$ is a capacitated vertex cover of $(G, c)$ (by \Cref{lem:capvc:augment}).
As $(X \setminus \{v\}) \cup W$ is a vertex cover of $G$, there is a vertex set $W' \subseteq Y$ of size at most $q-1$ such that $(X \setminus \{v\}) \cup W \cup W'$ is a connected vertex cover of $G$, where $q$ is the number of connected components in $G[(X \setminus \{v\}) \cup W]$ (by \Cref{lem:convc:augment}).
Since $G[(X \setminus \{v\}) \cup W]$ has at most $2\Delta$ components, the claim follows.
A similar argument also holds for the case of minimal connected and capacitated dominating sets in bounded-degree graphs.

\section*{Acknowledgements}
We are grateful to the anonymous reviewers for their careful reading of our manuscript and helpful comments.
\printbibliography

\end{document}